\algnewcommand{\LineComment}[1]{\Statex\hspace{\algorithmicindent}\(\triangleright\) #1}
\algnewcommand\algorithmicforeach{\textbf{for each}}
\algorithmic\endcsname{\itemsep\z@}{\itemsep=0.25ex}{}{}
\newcounter{usesmallsep}
\the\value{usesmallsep}=1
    \newlength{\myitemsep}
    \newlength{\mytopsep}
    \setlist[itemize]{leftmargin=\parindent,parsep=\parskip,
      listparindent=\parindent,itemsep=\myitemsep,topsep=\myitemsep}
    \setlist[enumerate]{leftmargin=\parindent,parsep=\parskip,
      listparindent=\parindent,itemsep=\myitemsep,,topsep=\myitemsep}
    \setlist[description]{font=\bfseries,leftmargin=\parindent,parsep=\parskip,
      listparindent=\parindent,itemsep=\myitemsep,topsep=\myitemsep}
    \newlength{\mypartitlesep}
    \titlespacing{\paragraph}{0pt}{\mypartitlesep}{\mypartitlesep}
    \newlength{\mythmsep}
    \newtheoremstyle{mythmstyle}
      {\mythmsep} %
      {\mythmsep} %
      {\itshape} %
      {} %
      {\bfseries} %
      {.} %
      {.5em} %
      {} %
    \newtheoremstyle{mydefstyle}
      {\mythmsep} %
      {\mythmsep} %
      {} %
      {} %
      {\bfseries} %
      {.} %
      {.5em} %
      {} %
    \theoremstyle{mythmstyle}
        \newtheorem{theorem}{Theorem}
        \newtheorem{proposition}[theorem]{Proposition}
        \newtheorem{lemma}[theorem]{Lemma}
        \newtheorem{corollary}[theorem]{Corollary}
        \newtheorem{fact}[theorem]{Fact}
        \newtheorem*{fact*}{Fact}
    \theoremstyle{mydefstyle}
        \newtheorem{definition}{Definition}
        \newtheorem{problem}{Problem}
        \newtheorem{assumption}{Assumption}
        \newtheorem{remark}{Remark}
        \newtheorem{algr}[algorithm]{Algorithm}
    \newenvironment{proof}
        {\vspace{-0.9em}\begin{proof}}
        {\end{proof}\vspace{-0.4em}}
    \theoremstyle{plain}
        \newtheorem{theorem}{Theorem}
        \newtheorem{proposition}[theorem]{Proposition}
        \newtheorem{fact}[theorem]{Fact}
        \newtheorem*{algr*}{Algorithm}
    \theoremstyle{definition}
        \newtheorem{definition}[theorem]{Definition}
        \newtheorem{remark}[theorem]{Remark}
    \setlist[itemize]{leftmargin=\parindent}
    \setlist[enumerate]{leftmargin=\parindent}
    \setlist[description]{font=\bfseries,leftmargin=\parindent}
\newcommand{\Hm}{\mathsf{H}}
\newcommand{\Real}{\mathbb{R}}
\newcommand{\fsimp}[2]{\sigma_{#2}}
\newcommand{\fnrsimp}[2]{\hat{\sigma}_{#2}}
\newcommand{\id}{\mathtt{id}}
\newcommand{\Pers}{\mathsf{Pers}}
\newcommand{\lbarrowspace}{\;}
\let\leftrightarrowsp\lrarrowsp
\newcommand{\incto}{\hookrightarrow}
\newcommand{\inctosp}[1]{\xhookrightarrow{\lbarrowspace#1\lbarrowspace}}
\newcommand{\bakincto}{\hookleftarrow}
\newcommand{\bakinctosp}[1]{\xhookleftarrow{\lbarrowspace#1\lbarrowspace}}
\newcommand{\given}{\,|\,}
\newcommand{\Set}[1]{\{#1\}}
\let\emptyset\varnothing
\let\intsec\intersect
\let\union\cup
\newcommand{\Ecal}{\mathcal{E}}
\newcommand{\Fcal}{\mathcal{F}}
\newcommand{\Ical}{\mathcal{I}}
\newcommand{\Lcal}{\mathcal{L}}
\newcommand{\Ucal}{\mathcal{U}}
\newcommand{\Zbb}{\mathbb{Z}}
\newcommand{\DG}{\Delta}
\newcommand{\LG}{\Lambda}
\newcommand{\oG}{\omega}
\newcommand{\sG}{\sigma}
\newcommand{\tG}{\tau}
\newcommand{\Dim}{p}
\newcommand{\birth}{b}
\newcommand{\death}{d}
\newcommand{\filtcnt}{m}
\newcommand{\simpcnt}{n}
\newcommand{\splx}{\sigma}
\newcommand{\Fnr}{{\hat{\Fcal}}}
\newcommand{\ud}{\Ucal}
\newcommand{\ucplx}{L}
\newcommand{\usimp}{\tau}
\newcommand{\ef}{\Ecal}
\newcommand{\add}[1]{{\downarrow}#1}
\newcommand{\del}[1]{{\uparrow}#1}
\newcommand{\Phat}{\textsc{Phat}}
\newcommand{\Gudhi}{\textsc{Gudhi}}
\newcommand{\Dionysustwo}{\textsc{Dionysus2}}
\newcommand{\cancel}[1]
\begin{document}

\title{Fast Computation of Zigzag Persistence\thanks{This research is partially supported by NSF grant CCF 2049010.}}

\author{Tamal K. Dey\thanks{Department of Computer Science, Purdue University. \texttt{tamaldey@purdue.edu}}
\and Tao Hou\thanks{School of Computing, DePaul University. \texttt{taohou01@gmail.com}}
}

\date{}

\maketitle
\thispagestyle{empty}

\begin{abstract}
Zigzag persistence is a powerful extension of the standard persistence
which allows deletions of simplices besides
insertions.
However, 
computing
zigzag persistence usually takes considerably more time than the standard persistence.
We propose an algorithm called \textsc{FastZigzag} which
narrows this efficiency gap.
Our main result is that an input simplex-wise zigzag filtration can be
converted to a \emph{cell}-wise non-zigzag filtration of a {$\DG$-complex} 
with the same length,
where the cells are copies of the input simplices. This conversion
step in \textsc{FastZigzag} incurs very little cost. Furthermore, the barcode
of the original filtration can be easily read from the barcode of the
new cell-wise filtration
because the conversion embodies
a series of \emph{diamond switches} known in topological data analysis.
This seemingly simple observation
opens up the vast possibilities for improving the computation of zigzag persistence
because any efficient algorithm/software for standard persistence
can now be applied to computing
zigzag persistence.
Our experiment shows that
this indeed achieves substantial performance gain over the existing state-of-the-art softwares.
\end{abstract}

\newpage
\setcounter{page}{1}

\newcommand{\defemph}[1]{\emph{#1}}

\section{Introduction}

Standard persistent homology defined over a growing sequence 
of simplicial complexes is
a fundamental tool in topological data analysis (TDA). Since the advent of
persistence algorithm~\cite{edelsbrunner2000topological} and its algebraic
understanding~\cite{zomorodian2005computing}, various extensions of the basic concept
have been explored~\cite{carlsson2010zigzag,carlsson2009zigzag-realvalue,cohen2009extending,de2011dualities}.
Among these extensions,
zigzag persistence introduced by Carlsson and de Silva~\cite{carlsson2010zigzag} is
an important one. It
empowered TDA to deal with 
filtrations where both insertion
and deletion of simplices are allowed. 
In practice, allowing deletion of simplices does make the topological tool more powerful.
For example, in dynamic networks~\cite{dey2021graph,holme2012temporal}
 a sequence of graphs may not grow monotonically but can also shrink due to
disappearance of vertex connections. Furthermore,
zigzag persistence seems to be naturally connected with the 
computations involving multiparameter
persistence, see e.g.~\cite{dey2021updating,dey2021computing}.

Zigzag persistence possesses some key differences
from standard persistence. For example,
unlike standard (non-zigzag) modules which decompose
into only finite and infinite intervals, zigzag modules
decompose into {four types} of intervals (see Definition~\ref{dfn:open-close-bd}).
Existing algorithms for computing zigzag persistence from a zigzag filtration~\cite{carlsson2009zigzag-realvalue,maria2014zigzag,maria2016computing,maria2019discrete}
are all based on maintaining explicitly or implicitly a consistent basis throughout the filtration.
This makes these algorithms
for zigzag persistence 
more involved and hence slower in practice
than algorithms for
the non-zigzag version though they have the same time complexity~\cite{milosavljevic2011zigzag}.
We sidestep the bottleneck of maintaining an explicit basis 
and propose an algorithm called \textsc{FastZigzag}, which converts
the input zigzag filtration
to a \emph{non-zigzag} filtration with an efficient strategy for mapping barcodes of the two bijectively. Then, we can apply any  efficient algorithm for standard persistence  on the
resulting non-zigzag filtration to compute the barcode of the original filtration.
Considering the abundance of optimizations~\cite{bauer2021ripser,bauer2014clear,bauer2017phat,BDM15,chen2011persistent,CK13} of standard persistence algorithms
and a recent GPU acceleration~\cite{zhang2020gpu},
the conversion in \textsc{FastZigzag} enables zigzag persistence computation
to take advantage of any existing or future improvements on standard persistence computation.
Our implementation, which uses the
\Phat{}~\cite{bauer2017phat} software for computing standard persistence,
shows substantial performance gain
over existing state-of-the-art softwares~\cite{Dionysus2,gudhi:urm}
for computing zigzag persistence
(see Section~\ref{sec:exp}).
We make our software publicly available through: 
\url{https://github.com/taohou01/fzz}.

To elaborate on the strategy of \textsc{FastZigzag}, we first observe 
a special type of zigzag filtrations 
called \emph{non-repetitive} zigzag filtrations in which
a simplex (or more generally, a \emph{cell}) is never added again once deleted. Such a filtration
admits an {\it up-down} filtration as its canonical form that can be
obtained by a series of \emph{diamond switches}~\cite{carlsson2010zigzag,carlsson2019parametrized,carlsson2009zigzag-realvalue}.
The up-down filtration can be further converted
into a {\it non-zigzag} filtration again using diamond switches as in the Mayer-Vietoris
pyramid presented in~\cite{carlsson2009zigzag-realvalue}. 
Individual switches are atomic tools that help us to show equivalence of barcodes, but
we do not need to actually execute them in computation.
Instead, we go straight
to the final form of the filtration quite easily and efficiently.
Finally, we observe that any zigzag filtration 
can be treated as a non-repetitive \emph{cell-wise} 
filtration of a \emph{$\DG$-complex}~\cite{hatcher2002algebraic} 
consisting of multisets of input simplices. 
This means that each repeatedly added
simplex is treated as a different cell in the $\DG$-complex, 
so that we can apply 
our findings for non-repetitive filtrations to
arbitrary filtrations.
The conversions
described above are detailed in Section~\ref{sec:fzz}.

\subsection{Related works}

Zigzag persistence is essentially an \emph{$A_n$-type quiver}~\cite{derksen2005quiver} in mathematics which
is first introduced to the TDA community by Carlsson and de Silva~\cite{carlsson2010zigzag}.
In their paper~\cite{carlsson2010zigzag},
Carlsson and de Silva also study the Mayer-Vietoris diamond used in this paper
and propose an algorithm for computing zigzag barcodes from zigzag modules
(i.e., an input is a sequence of vector spaces connected by linear maps encoded as
matrices).
Carlsson et al.~\cite{carlsson2009zigzag-realvalue} then propose an $O(mn^2)$ algorithm 
for computing zigzag barcodes from zigzag filtrations using a structure called right filtration.
In their paper~\cite{carlsson2009zigzag-realvalue}, Carlsson et al.\ also extend the classical sublevelset
filtrations for functions on topological spaces by proposing levelset zigzag filtrations
and show the equivalence of levelset zigzag with the extended persistence proposed by Cohen-Steiner et al.~\cite{cohen2009extending}.
Maria and Outdot~\cite{maria2014zigzag,maria2016computing}
propose an alternative algorithm for computing zigzag barcodes by 
attaching a reversed standard filtration to the end
of the partial zigzag filtration being scanned.
Their algorithm maintains the barcode over the Surjective and Transposition Diamond
on the constructed zigzag filtration~\cite{maria2014zigzag,maria2016computing}.
Maria and Schreiber~\cite{maria2019discrete} propose a Morse reduction preprocessing
for zigzag filtrations
which speeds up the zigzag barcode computation.
Carlsson et al.~\cite{carlsson2019persistent} discuss some matrix factorization techniques
for computing zigzag barcodes from zigzag modules, which, combined with a divide-and-conquer strategy, 
lead to a parallel algorithm for computing zigzag persistence.
Almost all algorithms reviewed so far have a cubic time complexity.
Milosavljevi{\'c} et al.~\cite{milosavljevic2011zigzag} establish
an $O(m^\oG)$ theoretical complexity for computing zigzag persistence from filtrations,
where $\omega< 2.37286$ is the matrix multiplication exponent~\cite{alman2021refined}.
Recently, Dey and Hou~\cite{dey2021graph} propose near-linear algorithms for computing 
zigzag persistence from the special cases of graph filtrations,
with the help of representatives defined for the intervals
and some dynamic graph data structures.

\section{Preliminaries}

\paragraph{$\DG$-complex.}

In this paper, we build filtrations on \emph{$\DG$-complexes}
which are extensions of simplicial complexes
described in Hatcher~\cite{hatcher2002algebraic}.
These $\DG$-complexes are
derived from a set of standard simplices by identifying the boundary
of each simplex with other simplices while preserving the vertex orders.
For distinction,
building blocks of $\DG$-complexes
(i.e., standard simplices) 
are called \emph{cells}.
Motivated by a construction 
from the input simplicial complex described in Algorithm~\ref{alg:conv-filt},
we use a \emph{more restricted} version of $\DG$-complexes,
where boundary cells of each $\Dim$-cell are identified with \emph{distinct} $(\Dim-1)$-cells.
Notice that this makes 
each $p$-cell 
combinatorially equivalent to
a $p$-simplex.
Hence, the difference of the $\DG$-complexes
in this paper
from the standard simplicial complexes
is that
common faces
of two cells in the $\DG$-complexes
can have more
relaxed forms.
For example,
in Figure~\ref{fig:bubble}, two `triangles' (2-cells)
in a $\DG$-complex having the same set of vertices
can either 
share 0, 1, 2, or 3 edges in their boundaries;
note that the two triangles in Figure~\ref{fig:two-tri-3}
form a 2-cycle.

\begin{figure}[t]
  \centering
  \captionsetup[subfigure]{justification=centering}

  \begin{subfigure}[t]{0.13\textwidth}
  \centering
  \includegraphics[width=\linewidth]{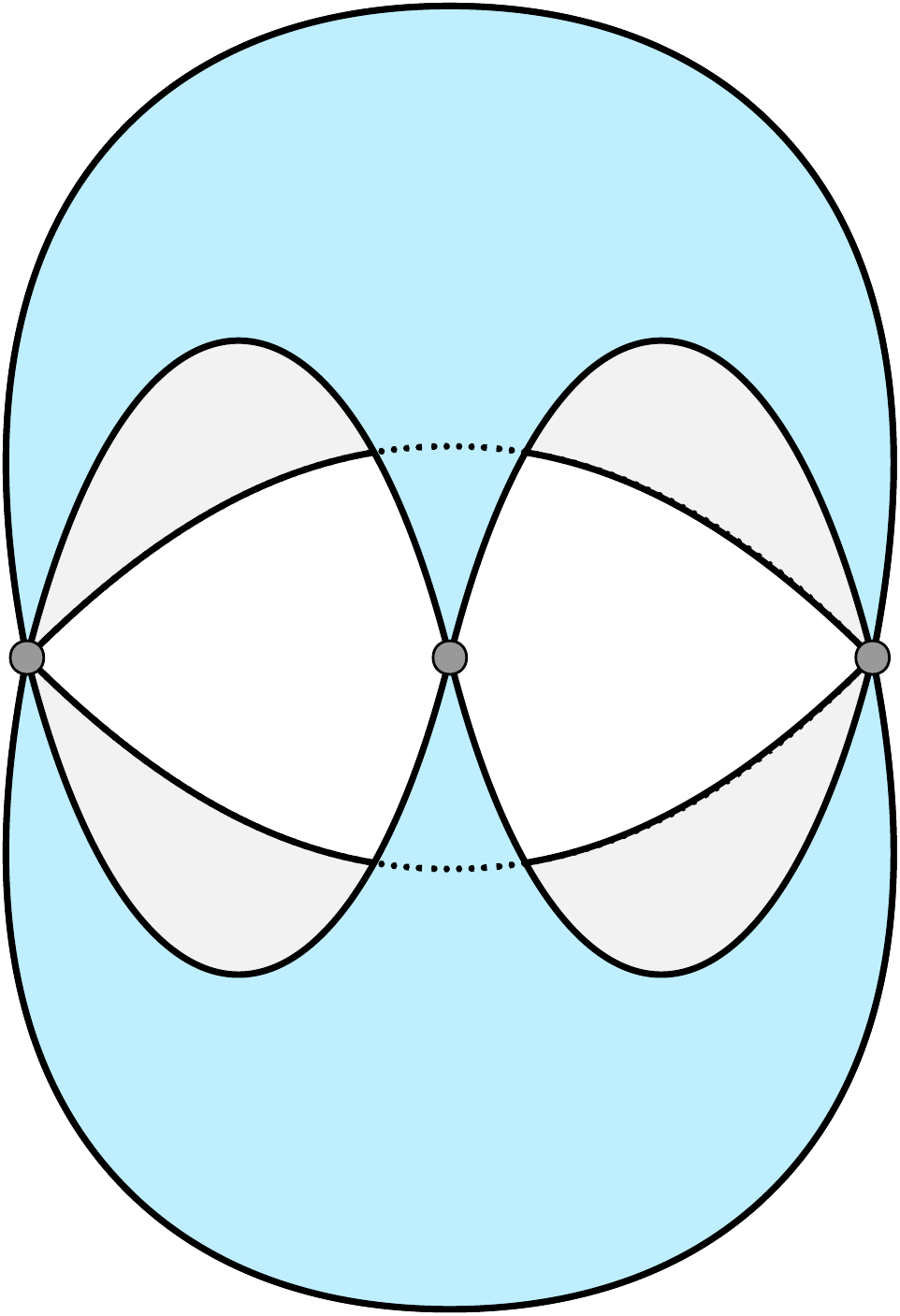}
  \caption{}
  \label{fig:two-tri-0}
  \end{subfigure}
  \hspace{4em}
  \begin{subfigure}[t]{0.13\textwidth}
  \centering
  \includegraphics[width=\linewidth]{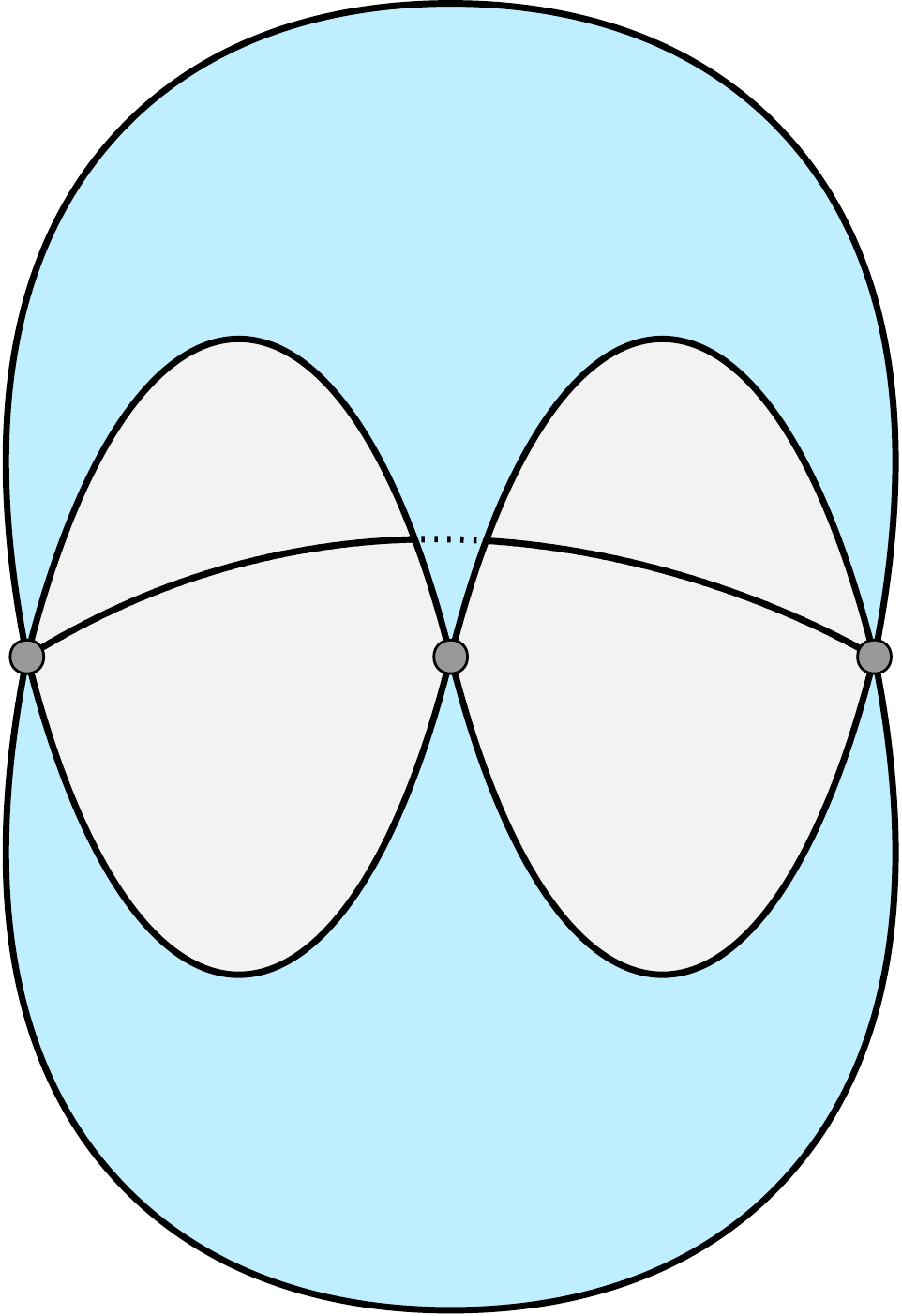}
  \caption{}
  \label{fig:two-tri-1}
  \end{subfigure}
  \hspace{4em}
  \begin{subfigure}[t]{0.13\textwidth}
  \centering
  \includegraphics[width=\linewidth]{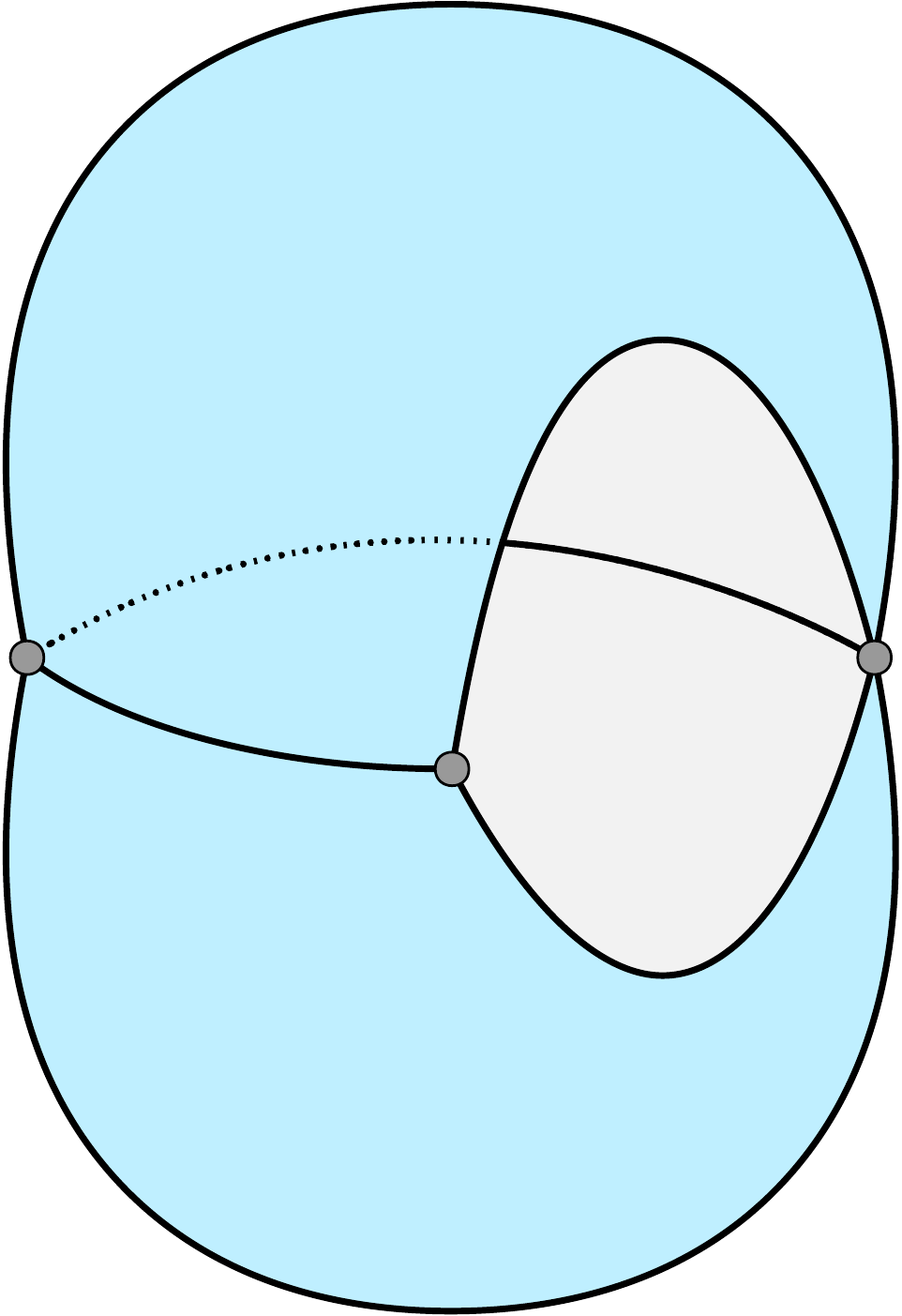}
  \caption{}
  \label{fig:two-tri-2}
  \end{subfigure}
  \hspace{4em}
  \begin{subfigure}[t]{0.13\textwidth}
  \centering
  \includegraphics[width=\linewidth]{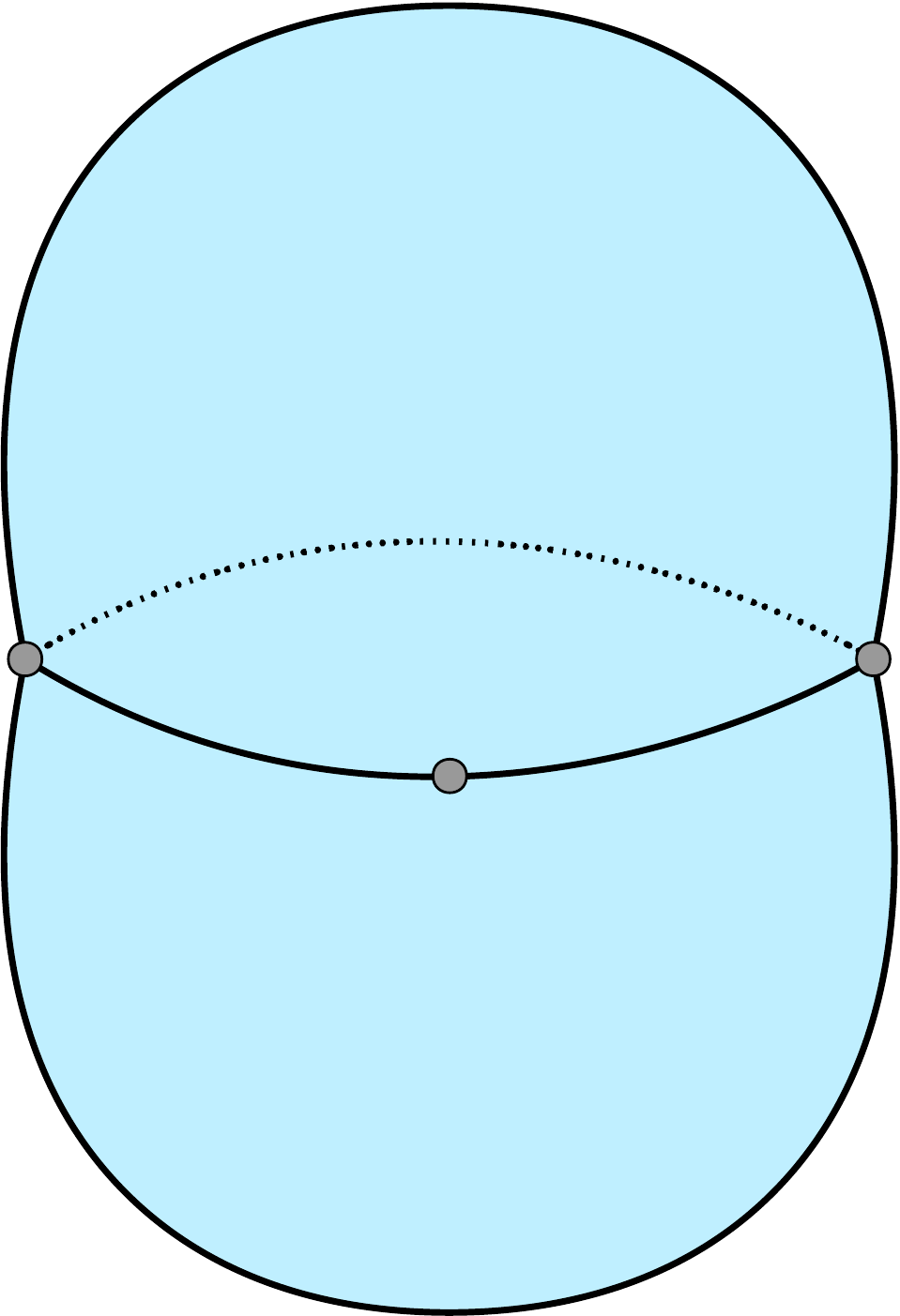}
  \caption{}
  \label{fig:two-tri-3}
  \end{subfigure}

  \caption{Examples of $\DG$-complexes with two triangles sharing 0, 1, 2, or 3 edges on their boundaries.}
  \label{fig:bubble}
\end{figure}

Formally, we define $\DG$-complexes recursively
similar to the classical definition of \emph{CW-complexes}~\cite{hatcher2002algebraic} 
though it need not be as general;
see Hatcher's book~\cite{hatcher2002algebraic} for a more general definition. 
Note that simplicial complexes are trivially 
$\DG$-complexes and therefore most definitions
in this section target $\DG$-complexes.

\begin{definition}
 A \defemph{$\DG$-complex} is defined recursively with dimension:
\begin{enumerate}
     \item A $0$-dimensional $\DG$-complex $K^0$ is a set of points, each called a \defemph{$0$-cell}.

     \item A $p$-dimensional $\DG$-complex $K^p$, $p\geq 1$, 
     is a quotient space of 
     a $(p-1)$-dimensional $\DG$-complex $K^{p-1}$ 
     along with several standard $p$-simplices. The quotienting 
     is realized by an attaching map
     $h:\partial(\sG)\to K^{p-1}$ which identifies the boundary $\partial(\sG)$ of each
     $p$-simplex $\sG$ with points in $K^{p-1}$ 
     so that $h$ is a homeomorphism onto its image.
     We term the standard $p$-simplex $\sG$ with boundary
     identified to $K^{p-1}$ as a \defemph{$p$-cell} in $K^p$.
     Furthermore, 
     we have that the restriction
     of $h$ to each proper face of $\sG$ is a \emph{homeomorphism} onto a cell in $K^{p-1}$.
 \end{enumerate}
\end{definition}

Notice that the original (more general) $\DG$-complexes~\cite{hatcher2002algebraic} require specifying vertex orders
when identifying the cells.
However, the restricted $\DG$-complexes defined above
do not require specifying such orders because
we always identify the boundaries
of cells by \emph{homeomorphisms} and hence the vertex orders for identification
are implicitly derived from a vertex order of a seeding cell.

\paragraph{Homology.}
Homology in this paper
is defined on $\DG$-complexes,
which is defined similarly as for simplicial complexes~\cite{hatcher2002algebraic}.
All homology groups are taken with $\Zbb_2$-coefficients
and therefore vector spaces mentioned in this paper are also over $\Zbb_2$.

\paragraph{Zigzag filtration and barcode.}

A {\it zigzag filtration} (or simply {\it filtration})
is a sequence of $\DG$-complexes 
\[\Fcal: K_0 \leftrightarrow K_1 \leftrightarrow 
\cdots \leftrightarrow K_\filtcnt,\]
in which each
$K_i\leftrightarrow K_{i+1}$ is either a forward inclusion $K_i\incto K_{i+1}$
or a backward inclusion $K_i\bakincto K_{i+1}$.
For computational purposes, 
we only consider \emph{cell-wise} filtrations in this paper, i.e.,
each inclusion $K_i\leftrightarrow K_{i+1}$ is an addition or deletion
of a \emph{single} cell;
such an inclusion is sometimes denoted as $K_i\leftrightarrowsp{\sG} K_{i+1}$
with $\sG$ indicating the cell being added or deleted.

We call $\Fcal$ as \emph{non-repetitive} if 
whenever a cell $\sG$ is deleted from $\Fcal$,
the cell $\sG$ is never added again.
We call
$\Fcal$ an {\it up-down} filtration~\cite{carlsson2009zigzag-realvalue} if 
$\Fcal$ can be separated into two parts such that
the first part contains only forward inclusions
and the second part contains only backward ones,
i.e., $\Fcal$ is of the form $\Fcal: K_0 \incto K_1 \incto 
\cdots \incto K_{\ell} \bakincto K_{\ell+1} \bakincto \cdots \bakincto K_\filtcnt$.
Usually in this paper, filtrations start and end with empty complexes,
e.g., $K_0=K_\filtcnt=\emptyset$ in $\Fcal$.

Applying the $\Dim$-th homology functor on $\Fcal$
induces a {\it zigzag module}:
\[\Hm_\Dim(\Fcal): 
\Hm_\Dim(K_0) 
\leftrightarrow
\Hm_\Dim(K_1) 
\leftrightarrow
\cdots 
\leftrightarrow
\Hm_\Dim(K_\filtcnt), \]
in which
each $\Hm_\Dim(K_i)\leftrightarrow \Hm_\Dim(K_{i+1})$
is a linear map induced by inclusion.
It is known~\cite{carlsson2010zigzag,Gabriel72} that
$\Hm_\Dim(\Fcal)$ has a decomposition of the form
$\Hm_\Dim(\Fcal)\simeq\bigoplus_{k\in\LG}\Ical^{[\birth_k,\death_k]}$,
in which each $\Ical^{[\birth_k,\death_k]}$
is a special type of zigzag module called {\it interval module} over the interval $[\birth_k,\death_k]$.
The (multi-)set of intervals
denoted as
$\Pers_\Dim(\Fcal):=\Set{[\birth_k,\death_k]\given k\in\LG}$
is an invariant of $\Fcal$
and is called the {\it $\Dim$-th barcode} of $\Fcal$.
Each interval in $\Pers_\Dim(\Fcal)$ is called a {\it $\Dim$-th persistence interval}
and is also said to be in dimension $\Dim$.
Frequently in this paper, we consider the barcode of $\Fcal$ in all dimensions
$\Pers_*(\Fcal):=\bigsqcup_{\Dim\geq 0}\Pers_\Dim(\Fcal)$.

\begin{definition}[Open and closed birth/death]
\label{dfn:open-close-bd}
For a zigzag filtration
$\Fcal: \emptyset=K_0 \leftrightarrow K_1 \leftrightarrow 
\cdots \leftrightarrow K_\filtcnt=\emptyset$,
the start of any interval in $\Pers_*(\Fcal)$ 
is called a \defemph{birth index}
in $\Fcal$
and the end of any interval is called a \defemph{death index}.
Moreover, 
a birth index $\birth$ is said to be \defemph{closed} 
if $K_{\birth-1}\incto K_\birth$ is a forward inclusion;
otherwise, $\birth$ is  \defemph{open}.
Symmetrically, 
a death index $\death$ is said to be  \defemph{closed}
if $K_{\death}\bakincto K_{\death+1}$ is a backward inclusion;
otherwise, $\death$ is  \defemph{open}.
The types of the birth/death ends
classify intervals in $\Pers_*(\Fcal)$ into four types: 
\defemph{closed-closed}, \defemph{closed-open}, \defemph{open-closed}, and \defemph{open-open}. 
\end{definition}
\begin{remark}
If $\Fcal$ is a levelset zigzag filtration~\cite{carlsson2009zigzag-realvalue},
then the open and closed ends defined above are the same as
for levelset zigzag.
\end{remark}

\begin{remark}\label{rmk:inc-bd}
An inclusion $K_i\leftrightarrow K_{i+1}$ in a cell-wise filtration 
either provides $i+1$ as a birth index
or provides $i$ as a death index (but cannot provide both).
\end{remark}

\paragraph{Mayer-Vietoris diamond.}

The algorithm in this paper
draws upon the Mayer-Vietoris diamond proposed by Carlsson and de Silva~\cite{carlsson2010zigzag}
(see also~\cite{carlsson2019parametrized,carlsson2009zigzag-realvalue}),
which relates barcodes of two filtrations
differing by a local change:

\begin{definition}[Mayer-Vietoris diamond~\cite{carlsson2010zigzag}]
\label{dfn:diamond}
Two cell-wise filtrations $\Fcal$ and $\Fcal'$ 
are related by a \defemph{Mayer-Vietoris diamond}
if they are of the following forms {\rm(}where $\splx\neq\tG${\rm)}:
\begin{equation}
\label{eqn:diamond}
\begin{tikzcd}[column sep=1.6em,
  row sep=0.4em,
]
\Fcal: &[-2em] & & & K_j\arrow[rd,hookleftarrow,pos=0.4,"\tG"]
\\
& K_0\arrow[r,leftrightarrow] & 
  \cdots\arrow[r,leftrightarrow] & 
  K_{j-1}\arrow[ur,hookrightarrow,pos=0.65,"\splx"]
  \arrow[dr,hookleftarrow,pos=0.4,"\tG"] 
  & & 
  K_{j+1}
  \arrow[r,leftrightarrow] & 
  \cdots\arrow[r,leftrightarrow] & 
  K_\filtcnt\\
\Fcal': & & & & K'_j\arrow[ru,hookrightarrow,pos=0.65,"\splx"]
\\
\end{tikzcd}
\end{equation}
In the above diagram,
$\Fcal$ and $\Fcal'$ differ only in the complexes at index $j$
and $\Fcal'$ is derived from $\Fcal$ by switching the addition of $\splx$ 
and deletion of $\tG$.
We also say that $\Fcal'$ is derived from $\Fcal$ by an \defemph{outward} switch %
and $\Fcal$ is derived from $\Fcal'$ by an \defemph{inward} switch.%
\end{definition}

\begin{remark}
In Equation~(\ref{eqn:diamond}),
we only provide a specific form of Mayer-Vietoris diamond
which is sufficient for our purposes;
see~\cite{carlsson2010zigzag,carlsson2009zigzag-realvalue} for a more general form.
According to~\cite{carlsson2010zigzag},
the diamond in Equation~(\ref{eqn:diamond})
is a Mayer-Vietoris diamond 
because $K_j=\linebreak[1]K_{j-1}\union K_{j+1}$
and $K'_j=K_{j-1}\intsec K_{j+1}$.
\end{remark}

We then have the following fact:

\begin{theorem}[Diamond Principle~\cite{carlsson2010zigzag}]
\label{thm:diamond}
Given two cell-wise filtrations $\Fcal,\Fcal'$ 
related by a Mayer-Vietoris diamond
as in Equation~(\ref{eqn:diamond}),
there is a bijection from $\Pers_*(\Fcal)$ to $\Pers_*(\Fcal')$
as follows{\rm:}

\begin{center}
\begin{tabular}{lll}
     \midrule
     $\Pers_*(\Fcal)$ & & $\Pers_*(\Fcal')$ \\
     \midrule
     $[b,j-1]${\rm;} $b\leq j-1$ & $\mapsto$ & $[b,j]$ \\
     $[b,j]${\rm;} $b\leq j-1$ & $\mapsto$ & $[b,j-1]$ \\
     $[j,d]${\rm;} $d\geq j+1$ & $\mapsto$ & $[j+1,d]$ \\
     $[j+1,d]${\rm;} $d\geq j+1$ & $\mapsto$ & $[j,d]$ \\
     $[j,j]$ {\rm of dimension $\Dim$}  & $\mapsto$ & $[j,j]$ {\rm of dimension $\Dim-1$} \\
     $[b,d]${\rm; all other cases} & $\mapsto$ & $[b,d]$ \\
     \midrule
\end{tabular}
\end{center}
Note that
the bijection preserves the dimension of the intervals
except for $[j,j]$.
\end{theorem}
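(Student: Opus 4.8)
The plan is to ground everything in the Mayer--Vietoris long exact sequence attached to the square formed by $K'_j,K_{j-1},K_{j+1},K_j$. First I would record the combinatorial content of Equation~(\ref{eqn:diamond}): since $\sG$ and $\tG$ are distinct cells, one has $K_j=K_{j-1}\union K_{j+1}$ and $K'_j=K_{j-1}\intsec K_{j+1}$, so at the chain level $\Chn_*(K_j)=\Chn_*(K_{j-1})+\Chn_*(K_{j+1})$ and $\Chn_*(K'_j)=\Chn_*(K_{j-1})\intsec\Chn_*(K_{j+1})$. This yields a short exact sequence of chain complexes and hence the natural long exact sequence
\[\cdots \to \Hm_\Dim(K'_j) \to \Hm_\Dim(K_{j-1})\oplus\Hm_\Dim(K_{j+1}) \to \Hm_\Dim(K_j) \xrightarrow{\partial} \Hm_{\Dim-1}(K'_j) \to \cdots.\]
The crucial feature is naturality: all four inclusion-induced maps and the connecting map $\partial$ commute with the inclusions making up the rest of $\Fcal$ and $\Fcal'$. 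Consequently the two zigzag modules $\Hm_*(\Fcal)$ and $\Hm_*(\Fcal')$ are literally identical away from the node $j$ and differ only in the middle space ($\Hm_\Dim(K_j)$ versus $\Hm_\Dim(K'_j)$) and the two pairs of maps incident to it.

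Next I would pass to a local analysis. Using the decomposition of $\Hm_\Dim(\Fcal)$ into interval modules (quoted above), I restrict attention to the three positions $j-1,j,j+1$. Any interval whose support avoids this window is unchanged and lands in the ``all other cases'' row. It then remains to match the intervals meeting the window, which the table sorts into four families: deaths in the zone $\{j-1,j\}$, births in the zone $\{j,j+1\}$, intervals passing straight through (death $\geq j+1$ and birth $\leq j-1$), and the single interval $[j,j]$.

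For the endpoint swaps I would exploit the cell-wise structure: adding one cell induces a map that is an isomorphism in every dimension except one, where it is either injective with one-dimensional cokernel (a creator) or surjective with one-dimensional kernel (a destroyer). Feeding this into the commuting square and the exact sequence pins down, in each dimension $\Dim$, which classes of $\Hm_\Dim(K_{j-1})$ and $\Hm_\Dim(K_{j+1})$ survive into $\Hm_\Dim(K_j)$ versus lift to $\Hm_\Dim(K'_j)$. This is exactly where the ``sink'' at $K_j$ in $\Fcal$ becomes the ``source'' at $K'_j$ in $\Fcal'$, forcing each death index to move between $j-1$ and $j$ and each birth index between $j$ and $j+1$ while preserving dimension; this gives the first four rows, and the same computation shows a through-passing class persists unchanged. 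The dimension-shifting row is handled by the connecting map: a class generating a $[j,j]$ interval in dimension $\Dim$ is by definition a class of $\Hm_\Dim(K_j)$ outside the image of both $\Hm_\Dim(K_{j-1})$ and $\Hm_\Dim(K_{j+1})$, so by exactness $\partial$ sends it isomorphically onto a class of $\Hm_{\Dim-1}(K'_j)$ that dies immediately toward both neighbors, i.e.\ a $[j,j]$ interval in dimension $\Dim-1$ of $\Fcal'$.

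I expect the main obstacle to be assembling these local matchings into a genuine bijection rather than a bare equality of multiplicities, and in particular coordinating the dimension-mixing introduced by $\partial$. Because the exact sequence ties dimension $\Dim$ to dimension $\Dim-1$, the modules $\Hm_\Dim(\Fcal)$ and $\Hm_\Dim(\Fcal')$ need not be isomorphic for a fixed $\Dim$; the correspondence only closes once one works with the total barcode $\Pers_*$ and lets the $[j,j]$ classes migrate between consecutive dimensions through $\partial$. Verifying that every class counted on the $\Fcal$ side is matched exactly once on the $\Fcal'$ side, with no double counting across the connecting map, is the delicate part; everything else reduces to the rank computations for single-cell additions described above.
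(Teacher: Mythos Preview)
The paper does not give its own proof of this theorem: it is stated as a citation of the Diamond Principle from Carlsson and de~Silva~\cite{carlsson2010zigzag}, and is used as a black box throughout (see the remark immediately following the statement and the proofs of Propositions~\ref{prop:norep-filt-UD}--\ref{prop:ef-ud-map}). So there is nothing in the paper to compare your argument against.

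That said, your outline is essentially the standard route taken in the original source. The identification $K_j=K_{j-1}\cup K_{j+1}$, $K'_j=K_{j-1}\cap K_{j+1}$ is exactly what the paper's remark after Definition~\ref{dfn:diamond} records, and the Mayer--Vietoris long exact sequence together with the local restriction to the window $\{j-1,j,j+1\}$ is how Carlsson and de~Silva establish the bijection. Your reading of the $[j,j]$ case via the connecting homomorphism $\partial$ is also the right mechanism for the dimension shift. The concern you flag at the end---upgrading rank/multiplicity equalities to an actual bijection of interval summands compatible across dimensions---is real and is precisely where the original proof does nontrivial work: one needs either the reflection-functor machinery or a careful splitting of the exact diamond into interval summands, not just a count of ranks. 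Your sketch stops short of that step, so as written it establishes equality of interval multiplicities in each of the six cases but not yet the claimed bijection; you would need to invoke (or reprove) the relevant splitting lemma for exact zigzag diamonds from~\cite{carlsson2010zigzag} to close that gap.
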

\begin{remark}\label{rmk:diamond}
In the above bijection, only an interval containing {\it some but not all}
of $\Set{j-1,j,j+1}$ maps to a different interval
or different dimension.
\end{remark}

\section{{\sc FastZigzag} algorithm}
\label{sec:fzz}

In this section,
we show that computing 
barcodes for an arbitrary zigzag filtration of simplicial complexes
can be reduced to computing barcodes
for a certain {\it non-zigzag} filtration of $\DG$-complexes.
The resulting algorithm called \textsc{FastZigzag}
is more efficient
considering that standard (non-zigzag) persistence
admits faster algorithms~\cite{bauer2021ripser,bauer2014clear,bauer2017phat,BDM15,chen2011persistent,CK13,zhang2020gpu}
in practice. We confirm the efficiency with experiments in Section~\ref{sec:exp}.

\subsection{Overview}

Given a \emph{simplex-wise} zigzag filtration 
\[\Fcal:
\emptyset=
K_0\leftrightarrowsp{\fsimp{}{0}} K_1\leftrightarrowsp{\fsimp{}{1}}
\cdots 
\leftrightarrowsp{\fsimp{}{\filtcnt-1}} K_\filtcnt
=\emptyset
\]
of simplicial complexes as input,
the \textsc{FastZigzag} algorithm has the following main procedure:
\begin{enumerate}
    \item \label{itm:conv-to-non-rep}
    Convert $\Fcal$ into a {\it non-repetitive} zigzag filtration of $\DG$-complexes.
    \item \label{itm:conv-to-ud}
    Convert the non-repetitive filtration to an \emph{up-down} filtration.
    \item \label{itm:conv-to-std}
    Convert the up-down filtration to a \emph{non-zigzag} filtration 
    with the help of an \emph{extended persistence} filtration.
    \item \label{itm:conv-bc}
    Compute the standard persistence barcode,
    which is then converted to the barcode for the input filtration based on rules
    given in Proposition~\ref{prop:ud-f-map} and~\ref{prop:ef-ud-map}.
\end{enumerate}

Step~\ref{itm:conv-to-non-rep} is achieved by simply treating 
each repeatedly added simplex in $\Fcal$ as a new
cell in the converted filtration (see also~\cite{milosavljevic2011zigzag}).
Throughout the section, we denote the converted
non-repetitive, \emph{cell-wise} filtration as 
\[\Fnr:
\emptyset=
\hat{K}_0\leftrightarrowsp{\fnrsimp{}{0}} \hat{K}_1\leftrightarrowsp{\fnrsimp{}{1}}
\cdots 
\leftrightarrowsp{\fnrsimp{}{\filtcnt-1}} \hat{K}_\filtcnt
=\emptyset.
\]

\begin{figure}
\centering
\includegraphics[width=0.08\textwidth]{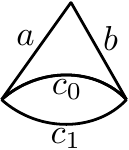}
\caption{The $\DG$-complex resulting from performing an inward switch around $\hat{K}_4$ for the example shown in Figure~\ref{fig:cell-filt}.}
\label{fig:K_hat_4}
\end{figure}

Notice that each $\hat{K}_i$ in $\Fnr$
is homeomorphic to $K_i$ in $\Fcal$,
and hence $\Pers_*(\Fcal)=\Pers_*(\Fnr)$.
However, we get an important difference between $\Fcal$ and $\Fnr$
by treating the simplicial complexes as $\DG$-complexes.
For example, 
in Figure~\ref{fig:cell-filt} presented later in this section,
the first addition of edge $c$ in $\Fcal$ corresponds to a cell $c_0$ in $\Fnr$
and its second addition in $\Fcal$ corresponds to a cell $c_1$.
Performing an inward switch around $\hat{K}_4$ (switching $\bakinctosp{c_0}$ and $\inctosp{c_1}$)
turns $\hat{K}_4$ into a $\DG$-complex as shown in Figure~\ref{fig:K_hat_4}. However, 
we cannot perform such a switch in $\Fcal$ which consists of simplicial complexes,
because diamond switches require the switched simplices or cells 
to be different (see Definition~\ref{dfn:diamond}).

In Section~\ref{sec:conv-to-ud} and~\ref{sec:conv-to-std},
we provide details for Step~\ref{itm:conv-to-ud} and~\ref{itm:conv-to-std}
as well as propositions for converting barcodes
mentioned in Step~\ref{itm:conv-bc}.
We summarize the filtration converting process in Section~\ref{sec:conv-summ}
by providing pseudocodes (Algorithm~\ref{alg:conv-filt}) and examples (Figure~\ref{fig:cell-filt} and~\ref{fig:complete-convert}).

\subsection{Conversion to up-down filtration}
\label{sec:conv-to-ud}

\begin{proposition}\label{prop:norep-filt-UD}
For the filtration $\Fnr$,
there is a cell-wise \emph{up-down} filtration
\[\ud:\emptyset=L_0\hookrightarrow L_1\hookrightarrow
\cdots\hookrightarrow 
L_{\simpcnt}
\hookleftarrow
L_{\simpcnt+1}\hookleftarrow
\cdots
\hookleftarrow L_{2\simpcnt}=\emptyset\]
derived from $\Fnr$ by a sequence of inward switches.
Note that $\filtcnt=2\simpcnt$.
\end{proposition}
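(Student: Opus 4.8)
The plan is to realize $\ud$ as the terminal filtration of a sorting procedure that repeatedly pushes cell-additions ahead of cell-deletions using only inward switches, controlled by an integer potential. Call a consecutive pair $\hat{K}_{k-1}\bakincto \hat{K}_k\incto \hat{K}_{k+1}$ in a cell-wise filtration a \emph{valley} (a backward inclusion immediately followed by a forward one), and define the potential $\Phi(\Fcal)$ to be the number of index pairs $i<j$ for which step $i$ is a deletion (backward) and step $j$ is an addition (forward). Since every step of a cell-wise filtration is either an addition or a deletion, $\Phi(\Fcal)=0$ holds exactly when all forward inclusions precede all backward inclusions, i.e.\ exactly when $\Fcal$ is an up-down filtration.

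First I would show that whenever the current filtration is not up-down it contains a valley. If its sequence of step-types is not of the form ``additions then deletions'', pick the smallest index $j$ whose step is an addition while some earlier step is a deletion; the step at $j-1$ cannot be an addition (it would then be a smaller such index), so it is a deletion, giving an adjacent valley $\hat{K}_{k-1}\bakinctosp{\tG}\hat{K}_k\inctosp{\sG}\hat{K}_{k+1}$ with $k=j-1$. Here non-repetitiveness forces $\sG\neq\tG$: otherwise the cell $\tG$ deleted at step $k$ would be re-added at step $k+1$. Hence the pair matches the lower (valley) row $\Fcal'$ of the Mayer--Vietoris diamond of Definition~\ref{dfn:diamond}, and I may apply an inward switch there.

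Next I would verify that the switch is legitimate and preserves the hypotheses. Writing $\hat{K}_k=\hat{K}_{k-1}\setminus\{\tG\}$ and $\hat{K}_{k+1}=\hat{K}_k\union\{\sG\}$, distinctness of $\sG,\tG$ yields $\sG\notin \hat{K}_{k-1}$ and that $\tG$ is not a face of $\sG$; therefore $\hat{K}_{k-1}\intsec\hat{K}_{k+1}=\hat{K}_k$, while $K_k:=\hat{K}_{k-1}\union\hat{K}_{k+1}=\hat{K}_{k-1}\union\{\sG\}$ is a valid $\DG$-complex from which $\tG$ is still deletable. Thus $K_k$ realizes the peak row $\Fcal$ of the diamond, and the inward switch replaces the valley by $\hat{K}_{k-1}\inctosp{\sG}K_k\bakinctosp{\tG}\hat{K}_{k+1}$. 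Since the switch merely transposes two adjacent events, each cell remains added exactly once and deleted exactly once with its addition preceding its deletion, so the result is again a non-repetitive cell-wise filtration and the induction continues.

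Finally, this transposition changes the inversion status of the single pair formerly occupying the valley and leaves every other pair's contribution unchanged, so each inward switch lowers $\Phi$ by exactly one. As $\Phi$ is a nonnegative integer, after exactly $\Phi(\Fnr)$ switches the procedure halts at a filtration $\ud$ with $\Phi(\ud)=0$, which is up-down. Because $\Fnr$ begins and ends at $\emptyset$ and is non-repetitive, every cell is added once and deleted once, so additions and deletions are equinumerous, say $\simpcnt$ each; this count is invariant under switches, giving $\filtcnt=2\simpcnt$ and the stated form $\emptyset=L_0\incto\cdots\incto L_{\simpcnt}\bakincto\cdots\bakincto L_{2\simpcnt}=\emptyset$. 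I expect the main obstacle to be the bookkeeping in the preceding paragraph, namely checking that at each valley the two cells are forced distinct and that the intermediate $\DG$-complexes stay well-defined after the switch, since these are precisely the conditions that permit the Mayer--Vietoris diamond (and hence Theorem~\ref{thm:diamond}) to be invoked at every step.
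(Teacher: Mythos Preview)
Your proof is correct and follows essentially the same approach as the paper's: both locate a ``valley'' (a deletion immediately followed by an addition), use non-repetitiveness to guarantee the two cells are distinct, and apply an inward switch there. The only organizational difference is that the paper phrases termination as ``the up-down prefix grows longer'' by pushing each addition leftward past a block of deletions, whereas you control termination via the inversion count $\Phi$; your version is slightly more explicit about why the intermediate filtrations remain valid non-repetitive cell-wise filtrations after each switch, which the paper leaves implicit.
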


\begin{proof}
Let $\hat{K}_{i}\bakinctosp{\fnrsimp{}{i}}\hat{K}_{i+1}$ be the first deletion in $\Fnr$
and $\hat{K}_{j}\inctosp{\fnrsimp{}{j}}\hat{K}_{j+1}$ be the first addition after that.
That is, $\Fnr$ is of the form
\[
\Fnr:\hat{K}_0\hookrightarrow \cdots\hookrightarrow
\hat{K}_{i}\bakinctosp{\fnrsimp{}{i}} 
\hat{K}_{i+1}\bakinctosp{\fnrsimp{}{i+1}}
\cdots
\bakinctosp{\fnrsimp{}{j-2}}
\hat{K}_{j-1}\bakinctosp{\fnrsimp{}{j-1}}
\hat{K}_{j}\inctosp{\fnrsimp{}{j}}\hat{K}_{j+1}
\leftrightarrow\cdots \leftrightarrow \hat{K}_\filtcnt.
\]
Since $\Fnr$ is non-repetitive,
we have $\fnrsimp{}{j-1}\neq\fnrsimp{}{j}$.
So we can switch 
$\bakinctosp{\fnrsimp{}{j-1}}$
and $\inctosp{\fnrsimp{}{j}}$ 
(which is an inward switch) to derive a filtration
\[
\hat{K}_0\hookrightarrow \cdots\hookrightarrow
\hat{K}_{i}\bakinctosp{\fnrsimp{}{i}} 
\hat{K}_{i+1}\bakinctosp{\fnrsimp{}{i+1}}
\cdots
\bakinctosp{\fnrsimp{}{j-2}}
\hat{K}_{j-1}\inctosp{\fnrsimp{}{j}}
\hat{K}'_{j}\bakinctosp{\fnrsimp{}{j-1}}
\hat{K}_{j+1}
\leftrightarrow\cdots \leftrightarrow \hat{K}_\filtcnt.
\]
We then continue performing such inward switches (e.g., the next switch is on 
$\bakinctosp{\fnrsimp{}{j-2}}$
and $\inctosp{\fnrsimp{}{j}}$)
to derive a filtration
\[
\Fnr':\hat{K}_0\hookrightarrow \cdots\hookrightarrow
\hat{K}_{i}\inctosp{\fnrsimp{}{j}}
\hat{K}'_{i+1}\bakinctosp{\fnrsimp{}{i}} 
\cdots
\bakinctosp{\fnrsimp{}{j-3}}
\hat{K}'_{j-1}\bakinctosp{\fnrsimp{}{j-2}}
\hat{K}'_{j}\bakinctosp{\fnrsimp{}{j-1}}
\hat{K}_{j+1}
\leftrightarrow\cdots \leftrightarrow \hat{K}_\filtcnt.
\]
Note that from $\Fnr$ to $\Fnr'$, the up-down `prefix' grows longer.
We can repeat the above operations on the newly derived $\Fnr'$
until the entire filtration turns into an up-down one.
\end{proof}

Throughout the section,
let
\[\ud:\emptyset=\ucplx_0\inctosp{\usimp_0} 
\cdots\inctosp{\usimp_{\simpcnt-1}} 
\ucplx_{\simpcnt}
\bakinctosp{\usimp_{\simpcnt}}
\cdots
\bakinctosp{\usimp_{2\simpcnt-1}} \ucplx_{2\simpcnt}=\emptyset\]
be the up-down filtration for $\Fnr$ 
as described in Proposition~\ref{prop:norep-filt-UD},
where $\filtcnt=2\simpcnt$.
We also let $\hat{K}=\ucplx_{\simpcnt}$.

In a cell-wise filtration,
for a cell $\sG$, let its addition (insertion) be denoted as $\add{\sG}$
and its deletion (removal) be denoted as $\del{\sG}$.
From the proof of Proposition~\ref{prop:norep-filt-UD},
we observe the following: during the transition from $\Fnr$ to $\ud$,
for any two additions $\add{\sG}$ and $\add{\sG'}$ in $\Fnr$ 
(and similarly for deletions),
if $\add{\sG}$ is before $\add{\sG'}$ in $\Fnr$,
then $\add{\sG}$ is also before $\add{\sG'}$ in $\ud$.
We then have the following fact:
\begin{fact}
Given the filtration $\Fnr$, to derive $\ud$, 
one only needs to scan $\Fnr$
and list all the additions first
and then the deletions, following the order in $\Fnr$.
\end{fact}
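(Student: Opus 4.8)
The plan is to read the Fact off directly from the structure of the inward switches produced in the proof of Proposition~\ref{prop:norep-filt-UD}. The key observation is that every inward switch used there is an adjacent transposition acting on the sequence of operations of a cell-wise filtration: it takes an adjacent pair consisting of a deletion immediately followed by an addition, $\del{\tG}\,\add{\sG}$, and replaces it with the addition followed by the deletion, $\add{\sG}\,\del{\tG}$ (recall $\sG\neq\tG$ since $\Fnr$ is non-repetitive). Thus the whole conversion from $\Fnr$ to $\ud$ is nothing but sorting the operation sequence of $\Fnr$ so that additions precede deletions, using only such adjacent swaps.

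First I would record the invariant that makes the argument work: an adjacent transposition of the above form only ever interchanges an addition with a deletion, and never two additions or two deletions. Consequently, for any two additions $\add{\sG}$ and $\add{\sG'}$ their relative order is unchanged by each switch, and likewise for any two deletions. Since $\ud$ is obtained from $\Fnr$ by a sequence of such switches, the additions appear in $\ud$ in exactly the same order as in $\Fnr$, and the deletions likewise appear in $\ud$ in their $\Fnr$-order. This is precisely the order-preservation statement recorded just before the Fact.

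Finally I would combine this with the defining property of $\ud$: being an up-down filtration, all of its additions precede all of its deletions. Putting the two together, the operation sequence of $\ud$ is obtained by listing, in their original $\Fnr$-order, first all additions of $\Fnr$ and then all deletions of $\Fnr$, which is exactly the recipe in the Fact. I do not expect a genuine obstacle here; the only point requiring care is the stability claim that relative order within each type is preserved, which is the standard fact that sorting a two-valued sequence by adjacent transpositions of out-of-order neighbors is stable. Everything else is immediate from Proposition~\ref{prop:norep-filt-UD} and the definition of an up-down filtration.
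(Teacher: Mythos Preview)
Your proposal is correct and follows essentially the same approach as the paper: the paper derives the Fact directly from the observation (stated immediately before the Fact) that the inward switches in Proposition~\ref{prop:norep-filt-UD} preserve the relative order among additions and among deletions, and you spell out exactly why this observation holds (each switch only transposes an adjacent deletion--addition pair). There is nothing to add.
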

\begin{remark}
Figure~\ref{fig:F-ud} gives an example of $\Fnr$ and its corresponding $\ud$,
where the additions and deletions in $\Fnr$ and $\ud$
follow the same order.
\end{remark}

\begin{definition}[Creator and destroyer]
\label{dfn:creator-destroyer}
For any interval $[b,d]\in\Pers_*(\Fnr)$, if $\hat{K}_{b-1}\leftrightarrowsp{\fnrsimp_{b-1}}\hat{K}_b$
is forward  {\rm(}resp.\ backward{\rm)}, 
we call $\add{\fnrsimp_{b-1}}$ {\rm(}resp.\ $\del{\fnrsimp_{b-1}}${\rm)} the \defemph{creator}
of $[b,d]$. 
Similarly, if $\hat{K}_{d}\leftrightarrowsp{\fnrsimp_{d}}\hat{K}_{d+1}$
is forward {\rm(}resp.\ backward{\rm)}, 
we call $\add{\fnrsimp_{d}}$ {\rm(}resp.\ $\del{\fnrsimp_{d}}${\rm)} the \defemph{destroyer}
of $[b,d]$.
\end{definition}

By inspecting the interval mapping in the Diamond Principle,
we have the following fact:
\begin{proposition}\label{prop:creat-destr-same}
For two cell-wise filtrations $\Lcal,\Lcal'$ 
related by a Mayer-Vietoris diamond,
any two intervals of $\Pers_*(\Lcal)$ and $\Pers_*(\Lcal')$
mapped by the Diamond Principle
have the same set of creator and destroyer,
though the creator and destroyer may swap.
This observation combined with Proposition~\ref{prop:norep-filt-UD}
implies that there is a bijection from $\Pers_*(\ud)$ to $\Pers_*(\Fnr)$
s.t.\ every two corresponding intervals
have the same set of creator and destroyer.
\end{proposition}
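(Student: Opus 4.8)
The plan is to establish the claim in two logically separate pieces, following the two sentences of the statement. First I would prove the local assertion: for a single Mayer-Vietoris diamond relating $\Lcal$ and $\Lcal'$ as in Equation~(\ref{eqn:diamond}), every interval of $\Pers_*(\Lcal)$ and its image under the Diamond Principle carry the same creator and destroyer. Then I would lift this to the global bijection between $\Pers_*(\ud)$ and $\Pers_*(\Fnr)$ by composing the local statements along the sequence of inward switches produced in Proposition~\ref{prop:norep-filt-UD}.

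For the local step, the key observation is that the creator and destroyer of an interval are defined in Definition~\ref{dfn:creator-destroyer} purely in terms of \emph{which cell} is added or deleted at the birth and death ends, i.e.\ the labels $\add{\sG}$ or $\del{\sG}$ attached to the relevant inclusions, not in terms of the numerical index. A diamond switch only relabels \emph{where} along the filtration the additions $\add{\splx}$ and deletions $\del{\tG}$ occur at index $j$; it does not change the cells $\splx,\tG$ themselves nor any inclusion outside indices $\{j-1,j,j+1\}$. So I would go through the six rows of the Diamond Principle table and check in each case that the cell creating the interval in $\Lcal$ is the cell creating the image interval in $\Lcal'$, and likewise for destroyers. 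By Remark~\ref{rmk:diamond}, only intervals containing some but not all of $\{j-1,j,j+1\}$ move, so all the verification is concentrated in the four shifting rows (plus the dimension-swap row $[j,j]\mapsto[j,j]$). For example, an interval $[b,j-1]\mapsto[b,j]$: its creator is determined by the inclusion at index $b$, which is untouched, so the creator is literally identical; its destroyer in $\Lcal$ is read off the inclusion $K_{j-1}\leftrightarrow K_j$ and in $\Lcal'$ off $K'_j\leftrightarrow K_{j+1}$ — tracing the diamond these are the same pair of cells $\{\splx,\tG\}$, and one checks the destroying cell is preserved while possibly the roles of creator/destroyer exchange, which is exactly the permitted ``swap.'' The $[j,j]$ case is where the creator and destroyer genuinely swap and the dimension changes, but the underlying \emph{set} $\{\add{\splx}\text{ or }\del{\splx},\add{\tG}\text{ or }\del{\tG}\}$ is unchanged.

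For the global step I would argue by induction on the number of inward switches used to pass from $\Fnr$ to $\ud$ in Proposition~\ref{prop:norep-filt-UD}. The base case is the identity bijection on $\Pers_*(\Fnr)$, trivially creator/destroyer preserving. For the inductive step, each inward switch is a single Mayer-Vietoris diamond, so the local step gives a creator/destroyer-preserving bijection for that switch; composing these bijections across the whole sequence yields the desired bijection from $\Pers_*(\ud)$ to $\Pers_*(\Fnr)$. The crucial point making the composition coherent is that a fixed cell $\sG$ keeps the same label $\add{\sG}$ or $\del{\sG}$ throughout all the switches (a switch never converts an addition into a deletion of the \emph{same} cell, since by Definition~\ref{dfn:diamond} the switched cells satisfy $\splx\neq\tG$), so ``same set of creator and destroyer'' is a property stable under composition.

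The main obstacle I expect is the bookkeeping in the local step rather than any conceptual difficulty: one must be careful that the Diamond Principle table describes a bijection on intervals indexed by position, whereas the creator/destroyer invariant lives on cells, and these two descriptions must be reconciled precisely at the four affected indices. In particular, for the rows that shift an endpoint by one (such as $[j,d]\mapsto[j+1,d]$), I would verify that although the birth index changes from $j$ to $j+1$, the cell read off at the new birth end of the image interval equals the cell read off at the old birth end, using the explicit forms of the inclusions at $K_{j-1},K_j,K'_j,K_{j+1}$ in the diamond. This is a finite, case-by-case check, but it is the part where an error is easiest to make, so it deserves the most care.
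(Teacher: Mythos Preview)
Your proposal is correct and follows essentially the same approach as the paper: the paper's proof is simply the one-line remark ``By inspecting the interval mapping in the Diamond Principle,'' which is precisely the case-by-case check of the table in Theorem~\ref{thm:diamond} that you outline, and the global bijection is obtained by composing along the switches of Proposition~\ref{prop:norep-filt-UD} just as you describe. Your write-up is in fact more detailed than the paper's, and your identification of the $[j,j]\mapsto[j,j]$ row as the unique place where creator and destroyer swap matches the paper's Remark~\ref{rem:critical}.
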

\begin{remark}
The only time when the creator and destroyer swap
in a Mayer-Vietoris diamond
is when the interval $[j,j]$ for the upper filtration in Equation~(\ref{eqn:diamond}) 
turns into the same interval (of one dimension lower) for the lower filtration.
\label{rem:critical}
\end{remark}

Consider the example in Figure~\ref{fig:F-ud}
for an illustration of Proposition~\ref{prop:creat-destr-same}.
In the example,
$[1,2]\in\Pers_1(\Fnr)$ corresponds to $[1,4]\in\Pers_1(\ud)$, 
where their creator is $\add{a}$ and their destroyer is $\del{d}$.
Moreover, $[4,6]\in\Pers_0(\Fnr)$ corresponds to $[4,5]\in\Pers_1(\ud)$.
The creator of $[4,6]\in\Pers_0(\Fnr)$ is $\del{e}$
and the destroyer is $\add{c}$.
Meanwhile, $[4,5]\in\Pers_1(\ud)$ has the same set of creator and destroyer
but the roles swap.

\begin{figure}
  \centering
  \includegraphics[width=\linewidth]{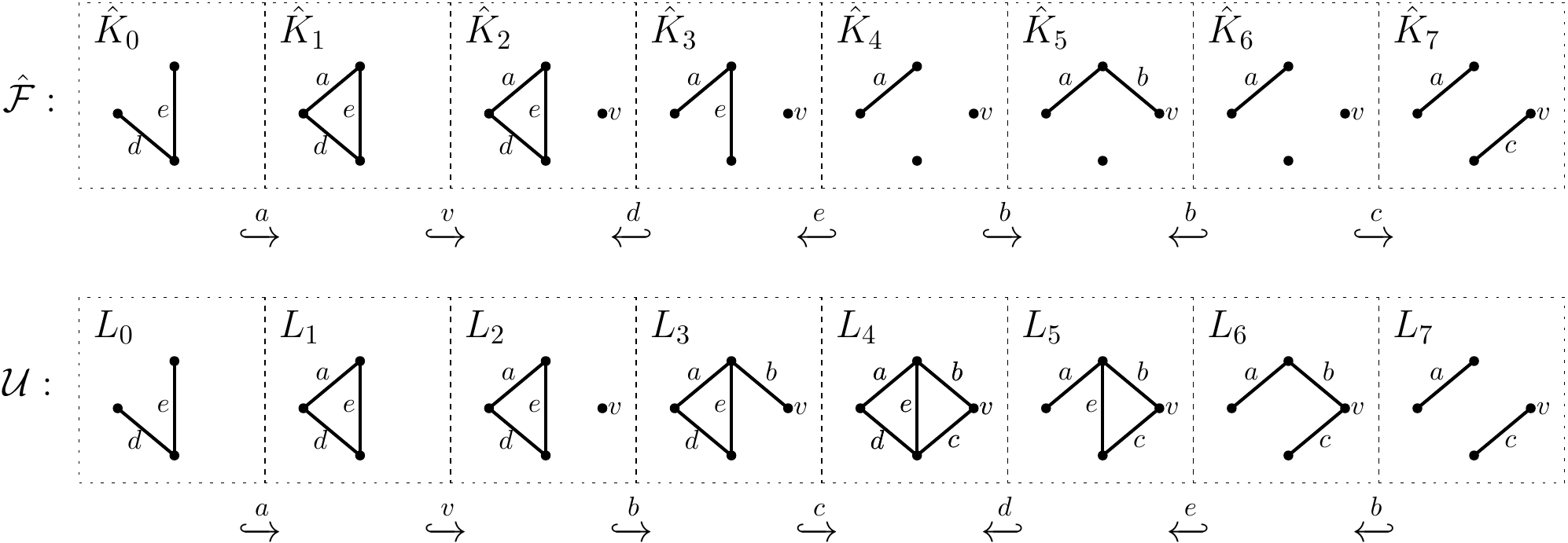}
  \caption{An example of filtration $\Fnr$ and its corresponding up-down filtration $\ud$.
  For brevity, $\Fnr$ does not start and end with empty complexes
  (which can be treated as a truncated case).}
  \label{fig:F-ud}
\end{figure}

For any $\add{\sG}$ or $\del{\sG}$ in $\Fnr$,
let $\id_\Fnr(\add{\sG})$ or $\id_\Fnr(\del{\sG})$ 
denote the index (position) of the addition or deletion.
For example, for an addition $\hat{K}_i\inctosp{\fnrsimp{}{i}}\hat{K}_{i+1}$ in $\Fnr$,
$\id_\Fnr(\add{\fnrsimp{}{i}})=i$.
Proposition~\ref{prop:creat-destr-same} indicates
the following explicit mapping from $\Pers_*(\ud)$ to $\Pers_*(\Fnr)$:

\newcommand{\mytabfont}{\footnotesize}
\newcommand{\mytabfonttwo}{\scriptsize}
\begin{proposition}\label{prop:ud-f-map}
There is a bijection from $\Pers_*(\ud)$ to $\Pers_*(\Fnr)$
which maps each $[b,d]\in\Pers_\Dim(\ud)$
by the following rule{\rm:}
{\rm \begin{center}
\begin{tabular}{cccccl}
\midrule
Type & Condition & & Type & Interval in $\Pers_*(\Fnr)$ & Dim  \\
\midrule
{\mytabfont closed-open} & - & {\mytabfonttwo$\mapsto$} & {\mytabfont closed-open} &
  $\big[\id_\Fnr(\add{\usimp_{b-1}})+1,\id_\Fnr({\add{\usimp_d}})\big]$ &
  $\Dim$ \\ 
\cmidrule{1-6}
{\mytabfont open-closed} & - & {\mytabfonttwo$\mapsto$} & {\mytabfont open-closed} & 
  $\big[\id_\Fnr(\del{\usimp_{b-1}})+1,\id_\Fnr({\del{\usimp_d}})\big]$ &
  $\Dim$ \\ 
\cmidrule{1-6}
\multirow{2}{*}{\mytabfont closed-closed} &
  $\id_\Fnr(\add{\usimp_{b-1}})<\id_\Fnr({\del{\usimp_d}})$ &
  {\mytabfonttwo$\mapsto$} & {\mytabfont closed-closed} &
  $\big[\id_\Fnr(\add{\usimp_{b-1}})+1,\id_\Fnr({\del{\usimp_d}})\big]$ &
  $\Dim$ \\
& $\id_\Fnr(\add{\usimp_{b-1}})>\id_\Fnr({\del{\usimp_d}})$ &
  {\mytabfonttwo$\mapsto$} & {\mytabfont open-open} &
  $\big[\id_\Fnr({\del{\usimp_d}})+1,\id_\Fnr(\add{\usimp_{b-1}})\big]$ &
  $\Dim{-}1$ \\
\midrule
\end{tabular}
\end{center}}
\end{proposition}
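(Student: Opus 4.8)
The plan is to build the required bijection directly from the creator--destroyer correspondence of Proposition~\ref{prop:creat-destr-same}, which already supplies a bijection $\Pers_*(\ud)\to\Pers_*(\Fnr)$ under which corresponding intervals carry the same unordered pair of creator and destroyer. The observation that turns this into the explicit table is that in \emph{any} cell-wise filtration an interval $[b',d']$ has its creator at position $b'-1$ and its destroyer at position $d'$, with $b'-1<d'$. Hence the creator and destroyer alone pin down the interval, and among the two fixed operations the one occurring \emph{earlier} in $\Fnr$ must be the creator there while the later one is the destroyer. So once I read off the creator and destroyer of an interval of $\ud$, the matching interval of $\Fnr$ is forced, with endpoints $\id_\Fnr(\text{creator})+1$ and $\id_\Fnr(\text{destroyer})$, and its type is dictated by the directions of the boundary inclusions at those two positions (an addition-creator is a closed birth, a deletion-destroyer is a closed death, and dually for the open ends).

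First I would classify the intervals of $\ud$ by type using its up-down shape: the birth inclusion of $[b,d]$ sits at position $b-1$, which is forward exactly when $b\le\simpcnt$, and the death inclusion sits at position $d$, which is backward exactly when $d\ge\simpcnt$. Thus a closed birth has creator $\add{\usimp_{b-1}}$, an open birth has creator $\del{\usimp_{b-1}}$, a closed death has destroyer $\del{\usimp_d}$, and an open death has destroyer $\add{\usimp_d}$. An open-open interval would require $b-1\ge\simpcnt$ (backward birth inclusion) together with $d\le\simpcnt-1$ (forward death inclusion), forcing $b>d$, which is impossible; hence $\ud$ has no open-open interval and only the three listed types occur.

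Next I would dispatch the two ``pure'' cases, where no role swap can happen. For a closed-open interval the creator and destroyer are both additions, $\add{\usimp_{b-1}}$ and $\add{\usimp_d}$; by the order-preserving fact established above (additions keep their $\Fnr$-order in $\ud$) and $b-1<d$, we get $\id_\Fnr(\add{\usimp_{b-1}})<\id_\Fnr(\add{\usimp_d})$, so $\add{\usimp_{b-1}}$ stays the creator and the image is the closed-open interval $\big[\id_\Fnr(\add{\usimp_{b-1}})+1,\id_\Fnr(\add{\usimp_d})\big]$ in dimension $\Dim$. The open-closed case is symmetric, with deletions replacing additions, yielding $\big[\id_\Fnr(\del{\usimp_{b-1}})+1,\id_\Fnr(\del{\usimp_d})\big]$. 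In both cases the $\Fnr$-type is the same as in $\ud$, since the creator and destroyer keep their addition/deletion character.

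The main obstacle is the closed-closed case, where the creator $\add{\usimp_{b-1}}$ is an addition but the destroyer $\del{\usimp_d}$ is a deletion, so their $\Fnr$-order is \emph{not} fixed and a role swap may occur. Here I would split on the sign of $\id_\Fnr(\add{\usimp_{b-1}})-\id_\Fnr(\del{\usimp_d})$, using the rule that the earlier operation is the $\Fnr$-creator: if the addition is earlier there is no swap and the image is the closed-closed interval $\big[\id_\Fnr(\add{\usimp_{b-1}})+1,\id_\Fnr(\del{\usimp_d})\big]$; if the deletion is earlier the roles swap and the image is $\big[\id_\Fnr(\del{\usimp_d})+1,\id_\Fnr(\add{\usimp_{b-1}})\big]$, whose birth inclusion is now backward and death inclusion forward, i.e.\ open-open. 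To see the swap lowers the dimension by exactly one, I would record that a closed $\Dim$-dimensional birth by an addition forces $\usimp_{b-1}$ to be a $\Dim$-cell and a closed $\Dim$-dimensional death by a deletion forces $\usimp_d$ to be a $\Dim$-cell; reinterpreting the deletion of the $\Dim$-cell $\usimp_d$ as a creator then produces a class of dimension $\Dim-1$, while reinterpreting the addition of the $\Dim$-cell $\usimp_{b-1}$ as a destroyer kills a class of dimension $\Dim-1$, matching the dimension drop predicted by Remark~\ref{rem:critical}. This reproduces both rows of the closed-closed block and completes the table.
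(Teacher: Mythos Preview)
Your proposal is correct and follows the same approach the paper implicitly takes: the paper does not give a formal proof of this proposition but simply states that Proposition~\ref{prop:creat-destr-same} ``indicates'' the explicit mapping, and your argument is precisely a careful unpacking of that indication. Your additional observations---that $\ud$ admits no open-open intervals, that the relative $\Fnr$-order of two additions (or two deletions) is inherited from $\ud$ so no swap can occur in the closed-open and open-closed cases, and that the dimension shift in the swapped closed-closed case can be read off from the cell dimensions of $\usimp_{b-1}$ and $\usimp_d$---fill in exactly the details the paper leaves to the reader and to its post-proposition remarks.
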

\begin{remark}
Notice that $\Pers_*(\ud)$ contains no open-open intervals. 
However, a closed-closed interval $[b,d]\in\Pers_\Dim(\ud)$ turns into an open-open
interval in $\Pers_{\Dim-1}(\Fnr)$ when $\id_\Fnr(\add{\usimp_{b-1}})>\id_\Fnr({\del{\usimp_d}})$.
Such a change
happens 
when a closed-closed interval
turns into a single point interval $[j,j]$ 
during the sequence of outward switches,
after which
the closed-closed interval $[j,j]$ becomes an open-open interval
$[j,j]$ with a dimension shift (see Theorem~\ref{thm:diamond}).
\end{remark}
\begin{remark}
Although it may take $O(\filtcnt^2)$ diamond switches to go from $\Fnr$ to $\ud$
or from $\ud$ to $\Fnr$ as indicated in Proposition~\ref{prop:norep-filt-UD}, 
we observe that these switches 
do not need to be actually executed 
in the algorithm. 
To convert the intervals in $\Pers_*(\ud)$ to those in $\Pers_*(\Fnr)$,
we only need to follow the mapping in Proposition~\ref{prop:ud-f-map},
which takes constant time per interval.
\end{remark}

We can take the example in Figure~\ref{fig:F-ud} for the mapping in
Proposition~\ref{prop:ud-f-map}. 
The interval $[4,5]\in\Pers_1(\ud)$ is a closed-closed one
whose creator is $\add{c}$ and destroyer is $\del{e}$.
We have that $\id_\Fnr(\add{c})=6>\id_\Fnr(\del{e})=3$.
So the corresponding interval in $\Pers_0(\Fnr)$ is
\[
[\id_\Fnr(\del{e})+1,\id_\Fnr(\add{c})]=[4,6].
\]

\subsection{Conversion to non-zigzag filtration}
\label{sec:conv-to-std}

We first convert the up-down filtration $\ud$ to
an extended persistence~\cite{cohen2009extending} filtration $\ef$, which
is then easily converted to an (absolute) non-zigzag filtration using the `coning' technique~\cite{cohen2009extending}.

Inspired by the Mayer-Vietoris pyramid in~\cite{carlsson2009zigzag-realvalue},
we relate $\Pers_*(\ud)$ to the barcode of the filtration $\ef$
 defined as:
\[\ef:\emptyset=\ucplx_0
\incto
\cdots
\incto
\ucplx_{\simpcnt}=(\hat{K},\ucplx_{2\simpcnt})
\incto
(\hat{K},\ucplx_{2\simpcnt-1})
\incto
\cdots
\incto
(\hat{K},\ucplx_{\simpcnt})=(\hat{K},\hat{K})\]
where $\ucplx_{\simpcnt}=\hat{K}=(\hat{K},\ucplx_{2\simpcnt}=\emptyset)$.
When denoting the persistence intervals of $\ef$,
we let the increasing index for the first half of $\ef$ continue to the second half,
i.e., $(\hat{K},L_{2\simpcnt-1})$ has index $\simpcnt+1$
and $(\hat{K},L_{\simpcnt})$ has index $2\simpcnt$.
Then, it can be verified that 
an interval $[b,d]\in\Pers_*(\ef)$ for $b<\simpcnt<d$
starts with the complex $L_b$ and ends with $(\hat{K},L_{3\simpcnt-d})$.

\begin{remark}
A filtration in extended persistence~\cite{cohen2009extending} 
is originally defined for a PL function $f$, where the first half
is the lower-star filtration of $f$
and the second half (the relative part)
is derived from the upper-star filtration of $f$.
The filtration $\ef$ defined above is a generalization of the one in~\cite{cohen2009extending}.
\end{remark}

\begin{proposition}\label{prop:ef-ud-map}
There is a bijection from $\Pers_*(\ef)$ to $\Pers_*(\ud)$
which maps each $[b,d]\in\Pers_*(\ef)$ of dimension $\Dim$
by the following rule{\rm:}
{\rm \begin{center}
\begin{tabular}{llclll}
\midrule
Type & Condition & & Type & Interv. in $\Pers_*(\ud)$ & Dim  \\
\midrule
Ord &
$d<\simpcnt$ & $\mapsto$ & 
  closed-open &
  $[b,d]$ &
  $\Dim$ \\ 
Rel &
$b>\simpcnt$ & $\mapsto$ & 
  open-closed &
  $[3\simpcnt-d,3\simpcnt-b]$ &
  $\Dim{-}1$ \\ 
Ext &
$b\leq\simpcnt\leq d$ & $\mapsto$ & 
  closed-closed &
  $[b,3\simpcnt-d-1]$ &
  $\Dim$ \\
\midrule
\end{tabular}
\end{center}}
\end{proposition}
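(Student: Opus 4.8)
The plan is to connect $\ud$ and $\ef$ by a \emph{Mayer--Vietoris pyramid}---a grid of cell-wise filtrations in which adjacent filtrations differ by a single Mayer--Vietoris diamond---and then transport intervals from $\Pers_*(\ud)$ to $\Pers_*(\ef)$ by iterating the Diamond Principle (Theorem~\ref{thm:diamond}). Write the second half of $\ud$ as the descending sequence $\hat{K}=\ucplx_\simpcnt\supset\ucplx_{\simpcnt+1}\supset\cdots\supset\ucplx_{2\simpcnt}=\emptyset$, and recall that the relative part of $\ef$ adds exactly the same cells $\usimp_{2\simpcnt-1},\ldots,\usimp_{\simpcnt}$ back into the subcomplex, in reverse order, so that the complex at index $\simpcnt+j$ of $\ef$ is the pair $(\hat{K},\ucplx_{2\simpcnt-j})$. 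The key structural observation is that $\ud$ and $\ef$ share the identical ascending first half $\emptyset=\ucplx_0\incto\cdots\incto\ucplx_\simpcnt=\hat{K}$, and differ only in how the descending data is presented (absolutely in $\ud$, relatively in $\ef$).

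First I would dispose of the ordinary intervals. Restricting both homology modules to indices $0,\ldots,\simpcnt$ gives the same persistence module $\Hm_\Dim(\ucplx_0)\to\cdots\to\Hm_\Dim(\hat{K})$, so any summand that is born and dies strictly inside this common part---precisely the condition $d<\simpcnt$---appears identically in both total decompositions. Such a bar is a closed--open interval of $\ud$ (a closed, forward birth and an open, forward death, by Definition~\ref{dfn:open-close-bd}) with the same endpoints and dimension, giving the Ord row of the table. Only bars reaching the junction $\Hm_\Dim(\hat{K})$ can be affected by what happens afterward, which is where the real work lies.

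The substance is the conversion of the absolute descending zigzag of $\ud$ into the relative ascending filtration of $\ef$. Each elementary step is a Mayer--Vietoris diamond in the general form of~\cite{carlsson2010zigzag} (whose lower and upper corners are the intersection and union of the two neighboring complexes, as in the remark following Definition~\ref{dfn:diamond}); sliding the deletion of a cell through the ascending cells reinterprets it as the corresponding relative addition. Applying Theorem~\ref{thm:diamond} to each diamond moves an interval endpoint near the switch index by $\pm1$, and composing these local moves across all $\simpcnt$ columns of the pyramid implements a global reflection of the second-half indices of the form $k\mapsto 3\simpcnt-k$ (with the stated off-by-one in the extended case). A closed--closed bar of $\ud$ (born at $b$ in the ascending half, dying in the descending half) keeps its birth and has its death reflected, yielding the extended interval $[b,3\simpcnt-d-1]$ in the same dimension $\Dim$; here one checks the range $d\le 2\simpcnt-1$, forced because $(\hat{K},\hat{K})$ has trivial homology. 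An open--closed bar of $\ud$ (born and dying in the descending half) is reflected at both ends and, since its creating class lies in the image of a connecting homomorphism, passes through exactly one $[j,j]$-collapse---the sole case of Theorem~\ref{thm:diamond} that lowers dimension---producing the relative interval $[3\simpcnt-d,3\simpcnt-b]$ in dimension $\Dim-1$. Verifying the open/closed labels against Definition~\ref{dfn:open-close-bd} confirms the Rel and Ext rows.

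The main obstacle is not the conceptual content but making the pyramid precise: one must either state the general Mayer--Vietoris diamond for pairs or first realize the relative complexes as honest $\DG$-complexes (e.g.\ by the coning of~\cite{cohen2009extending} invoked in the following step) so that Theorem~\ref{thm:diamond} applies verbatim, and then carry the index arithmetic through all $\simpcnt$ reflections without error. A cleaner but less self-contained alternative avoids the pyramid altogether: use the long exact sequence of the pair $(\hat{K},\ucplx_{\simpcnt+j})$, natural in $j$, to match the closed--open, closed--closed, and open--closed parts of the zigzag decomposition of $\Hm_\Dim(\ud)$ with the ordinary, extended, and relative summands of $\Hm_\Dim(\ef)$, with the connecting homomorphism $\partial\colon\Hm_\Dim(\hat{K},\ucplx_{\simpcnt+j})\to\Hm_{\Dim-1}(\ucplx_{\simpcnt+j})$ supplying exactly the dimension drop in the relative case.
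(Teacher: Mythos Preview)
Your proposal is correct and follows essentially the same route as the paper: build the Mayer--Vietoris pyramid linking the descending half of $\ud$ to the relative half of $\ef$ and transport intervals through its squares via the Diamond Principle, with the Ord case handled trivially by the shared ascending half. The paper's proof additionally records a shortcut---corresponding intervals in $\Pers_*(\ef)$ and $\Pers_*(\ud)$ share the same creator/destroyer pair up to a swap, which both verifies the index formulas at once and accounts for the dimension drop in the Rel case---but this is a verification convenience layered on the same pyramid, not a different strategy.
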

\begin{remark}
The types `Ord', `Rel', and `Ext' for intervals in $\Pers_*(\ef)$
are as defined in~\cite{cohen2009extending},
which stand for intervals from the {\it ordinary} sub-barcode,
the {\it relative} sub-barcode,
and the {\it extended} sub-barcode.
\end{remark}
\begin{remark}
The above proposition can also be stated by associating the creators and destroyers
as in Proposition~\ref{prop:creat-destr-same} and~\ref{prop:ud-f-map}. The association of additions in the first half of  $\ud$ and $\ef$ 
is straightforward and the deletion of a $\sG$ in $\ud$ is associated with the addition of $\sG$ 
(to the second complex in the pair) in $\ef$.
Then, corresponding intervals in $\Pers_*(\ef)$ and $\Pers_*(\ud)$ in the above proposition 
also have the same set of creators and destroyers.
Combined with Proposition~\ref{prop:creat-destr-same},
we further have that intervals in $\Pers_*(\Fcal)$ and $\Pers_*(\ef)$ can be associated by a
bijection where corresponding intervals have the same pairs of simplices
though they may switch roles of being creators and destroyers. These switches
coincide with the shift in the degree of the homology by having an interval in
$\Pers_p(\ef)$ correspond to an interval in $\Pers_{p-1}(\Fcal)$.
\end{remark}
\begin{proof}
We can build a Mayer-Vietoris pyramid relating the second half 
of $\ef$ and the second half of $\ud$
similar to the one in~\cite{carlsson2009zigzag-realvalue}.
A pyramid for $\simpcnt=4$ is shown in Figure~\ref{fig:pyramid},
where the second half of $\ef$ is along the left side of the triangle
and the second half of $\ud$ is along the bottom.
In Figure~\ref{fig:pyramid}, we represent the second half of $\ef$ and $\ud$
in a slightly different way
considering that $L_4=\hat{K}$ and $L_8=\emptyset$.
Also, each vertical arrow indicates
the addition of a simplex in the second complex of the pair
and each horizontal arrow indicates
the deletion of a simplex in the first complex.

To see the correctness of the mapping,
we first note that
each square in the pyramid is a (more general version of) Mayer-Vietoris diamond as defined in~\cite{carlsson2009zigzag-realvalue}.
Then, 
the mapping stated in the proposition can be verified
using the Diamond Principle (Theorem~\ref{thm:diamond}).
However,
there is a quicker way to verify the mapping
by observing the following: 
corresponding intervals in $\Pers_*(\ef)$ and $\Pers_*(\ud)$
have the same set of creator and destroyer if we
ignore whether it is the addition or deletion of a simplex.
For example, an interval in $\Pers_*(\ef)$ may be created by the addition 
of a simplex $\sG$ in the first half of $\ef$ and 
destroyed by the addition of another simplex $\sG'$ in the second half of $\ef$.
Then, its corresponding interval in $\Pers_*(\ud)$ is 
also created by the addition of $\sG$ in the first half but 
destroyed by the {\it deletion} of $\sG'$ in the second half.
Note that the dimension change for the case $b>\simpcnt$ is caused by
the swap of creator and destroyer.
\end{proof}
\begin{figure}
\centering
\begin{tikzpicture}[xscale=2.5,yscale=1.5]
\draw (0,0) node(48) {$(L_4,L_8)$} ;
\draw (1,0) node(58) {$(L_5,L_8)$} ;
\draw (2,0) node(68) {$(L_6,L_8)$} ;
\draw (3,0) node(78) {$(L_7,L_8)$} ;
\draw (4,0) node(88) {$(L_8,L_8)$} ;

\draw (0,1) node(47) {$(L_4,L_7)$} ;
\draw (1,1) node(57) {$(L_5,L_7)$} ;
\draw (2,1) node(67) {$(L_6,L_7)$} ;
\draw (3,1) node(77) {$(L_7,L_7)$} ;

\draw (0,2) node(46) {$(L_4,L_6)$} ;
\draw (1,2) node(56) {$(L_5,L_6)$} ;
\draw (2,2) node(66) {$(L_6,L_6)$} ;

\draw (0,3) node(45) {$(L_4,L_5)$} ;
\draw (1,3) node(55) {$(L_5,L_5)$} ;

\draw (0,4) node(44) {$(L_4,L_4)$} ;

\draw[<-] (48) edge node[above]{$\usimp_4$} (58);
\draw[<-] (58) edge node[above]{$\usimp_5$} (68);
\draw[<-] (68) edge node[above]{$\usimp_6$} (78);
\draw[<-] (78) edge node[above]{$\usimp_7$} (88);

\draw[<-] (47) edge node[above]{$\usimp_4$} (57);
\draw[<-] (57) edge node[above]{$\usimp_5$} (67);
\draw[<-] (67) edge node[above]{$\usimp_6$} (77);

\draw[<-] (46) edge node[above]{$\usimp_4$} (56);
\draw[<-] (56) edge node[above]{$\usimp_5$} (66);

\draw[<-] (45) edge node[above]{$\usimp_4$} (55);

\draw[->] (48) edge node[left]{$\usimp_7$} (47);
\draw[->] (47) edge node[left]{$\usimp_6$} (46);
\draw[->] (46) edge node[left]{$\usimp_5$} (45);
\draw[->] (45) edge node[left]{$\usimp_4$} (44);

\draw[->] (58) edge node[left]{$\usimp_7$} (57);
\draw[->] (57) edge node[left]{$\usimp_6$} (56);
\draw[->] (56) edge node[left]{$\usimp_5$} (55);

\draw[->] (68) edge node[left]{$\usimp_7$} (67);
\draw[->] (67) edge node[left]{$\usimp_6$} (66);

\draw[->] (78) edge node[left]{$\usimp_7$} (77);

\draw [decorate,decoration={brace,amplitude=10pt},xshift=-4pt,yshift=0pt]
(-0.25,-0.1) --
node [midway,above,sloped,yshift=10pt] {\rotatebox{-90}{$\ef$}}
(-0.25,4.1);

\draw [decorate,decoration={brace,amplitude=10pt,mirror},xshift=-4pt,yshift=0pt]
(-0.1,-0.3) -- 
node [midway,below,yshift=-10pt] {$\ud$}
(4.4,-0.3);
\end{tikzpicture}
\caption{A Mayer-Vietoris pyramid relating the second half 
of $\ef$ and $\ud$ for $\simpcnt=4$.}
\label{fig:pyramid}
\end{figure}

By Proposition~\ref{prop:ud-f-map} and~\ref{prop:ef-ud-map}, we
only need to compute $\Pers_*(\ef)$
in order to compute $\Pers_*(\Fcal)$.
The barcode of $\ef$ can be computed using the `coning' technique~\cite{cohen2009extending},
which converts $\ef$ into an (absolute) non-zigzag filtration $\hat{\ef}$.
Specifically, let $\oG$ be a vertex different from all vertices in $\hat{K}$.
For a $\Dim$-cell $\sG$ of $\hat{K}$,
we let $\oG\cdot\sG$ denote the {\it cone} of $\sG$,
which is a $(\Dim+1)$-cell having 
cells $\Set{\sG}\union\Set{\oG\cdot\tG\mid\tG\in\partial\sG}$
in its boundary.
The \emph{cone} $\oG\cdot L_i$ of a complex $L_i$ consists of three parts: 
the vertex $\oG$, $L_i$, and cones of all cells of $L_i$.
The filtration $\hat{\ef}$ is then defined as~\cite{cohen2009extending}:
\[\hat{\ef}:
\ucplx_0\union \Set{\oG}
\incto
\cdots
\incto
\ucplx_{\simpcnt}\union \Set{\oG}
=\hat{K}\union\oG\cdot\ucplx_{2\simpcnt}
\incto
\hat{K}\union\oG\cdot\ucplx_{2\simpcnt-1}
\incto
\cdots
\incto
\hat{K}\union\oG\cdot\ucplx_{\simpcnt}
\]
We have that $\Pers_*(\ef)$ equals $\Pers_*(\hat{\ef})$
discarding the only infinite interval~\cite{cohen2009extending}.
Note that if a cell $\sG$ is added (to the second complex) 
from $(\hat{K},\ucplx_{i})$ to $(\hat{K},\ucplx_{i-1})$ in $\ef$,
then the cone $\oG\cdot\sG$ is added 
from $\hat{K}\union\oG\cdot\ucplx_{i}$ to $\hat{K}\union\oG\cdot\ucplx_{i-1}$ in $\hat{\ef}$.

\subsection{Summary of filtration conversion}
\label{sec:conv-summ}

We summarize the filtration conversion process described in this section
in Algorithm~\ref{alg:conv-filt},
in which we assume that each simplex in $\Fcal$ is given
by its set of vertices.
The converted standard filtration $\hat{\ef}$
is represented 
by its boundary matrix $D$, whose columns (and equivalently the chains they represent)
are treated as sets of identifiers of the boundary cells.
Algorithm~\ref{alg:conv-filt} also maintains the following data structures:
\begin{itemize}
    \item 
    $\mathtt{cid}$ denotes the map from a simplex $\sG$
    to the identifier of the \emph{most recent} copy of cell corresponding to $\sG$.
    \item 
    $\mathtt{del\_list}$ denotes the list of cell identifiers deleted in the input filtration.
    \item 
    $\mathtt{cone\_id}$ denotes the map from the identifier of a cell
    to that of its coned cell.
\end{itemize}

\begin{algorithm}[th!]
\caption{Pseudocode for converting input filtration}

\vspace{1ex}

\begin{algorithmic}[1]
\Procedure{ConvertFilt}{$\Fcal$}
\State initialize boundary matrix $D$, cell-id map $\mathtt{cid}$, deleted cell list $\mathtt{del\_list}$ as empty
\State append an empty column to $D$ representing vertex $\oG$ for coning
\vspace{0.3em}
\State $\mathtt{id}\leftarrow 1$
\Comment{variable keeping track of id for cells}
\ForEach{$K_i\leftrightarrowsp{\sG_i}K_{i+1}$ in $\Fcal$}
\If{$\sG_i$ is being inserted}
\State $\mathtt{cid}[\sG_i]=\mathtt{id}$ 
\Comment{get a new cell as a copy of simplex $\sG_i$}
\State $\mathtt{col}\leftarrow \textsc{CellBoundary}(\sG_i,\mathtt{cid})$
\label{line:CellBoundary}
\State append $\mathtt{col}$ to $D$
\State $\mathtt{id}\leftarrow\mathtt{id}+1$
\Else
\State append $\mathtt{cid}[\sG_i]$ to $\mathtt{del\_list}$
\EndIf
\EndFor
\State initialize map $\mathtt{cone\_id}$ as empty \Comment{$\mathtt{cone\_id}$ tracks id for coned cells}
\ForEach{$\mathtt{del\_id}$ in $\mathtt{del\_list}$ (accessed reversely)}
\State $\mathtt{cone\_id}[\mathtt{del\_id}]\leftarrow\mathtt{id}$
\Comment{get a new coned cell}
\State $\mathtt{col}\leftarrow \textsc{ConedCellBoundary}(\mathtt{del\_id},D,\mathtt{cone\_id})$
\label{line:ConedCellBoundary}
\State append $\mathtt{col}$ to $D$
\State $\mathtt{id}\leftarrow\mathtt{id}+1$
\EndFor

\State \Return $D$ 
\EndProcedure
\end{algorithmic}
\label{alg:conv-filt}
\end{algorithm}

Subroutine \textsc{CellBoundary} in Line~\ref{line:CellBoundary}
converts boundary simplices of $\sG_i$
to a column of cell identifiers
based on the map $\mathtt{cid}$.
Subroutine \textsc{ConedCellBoundary} in Line~\ref{line:ConedCellBoundary}
returns boundary column for the cone of the cell identified by $\mathtt{del\_id}$.

\begin{figure}[!tbh]
  \centering
  \includegraphics[width=\linewidth]{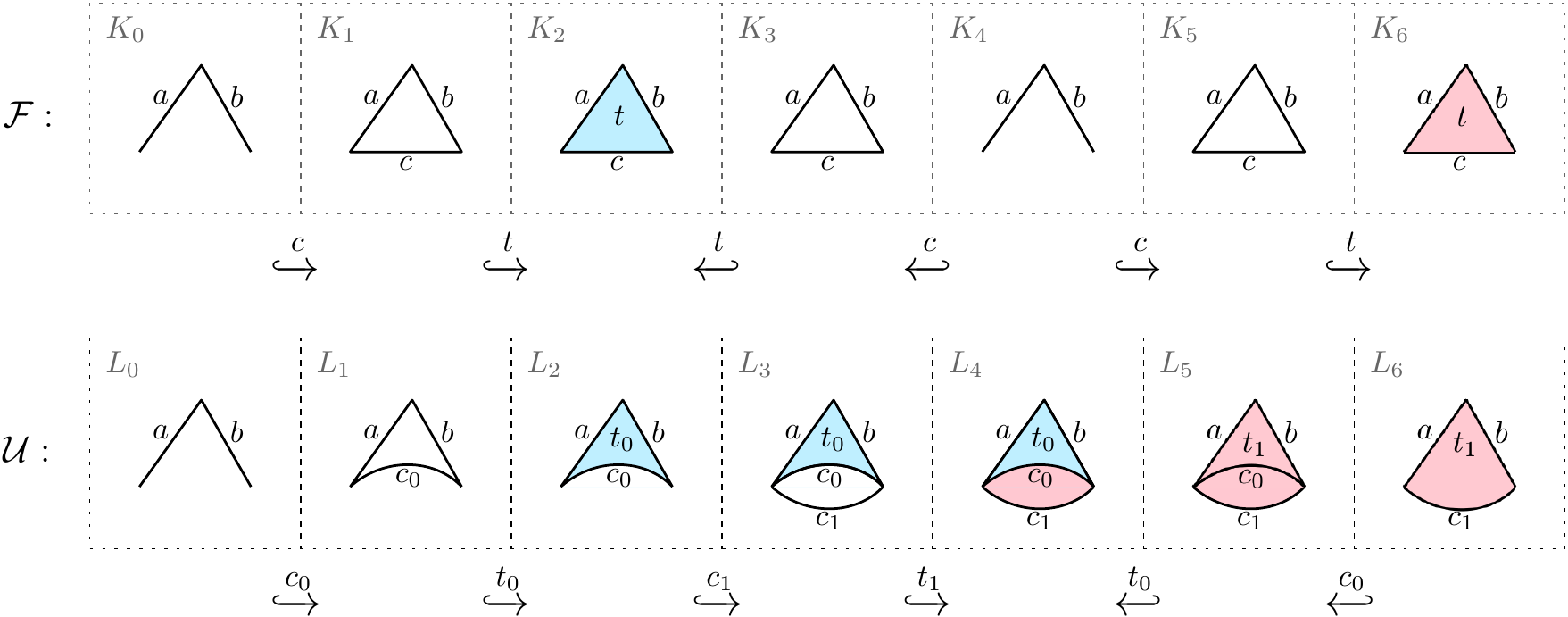}
  \caption{An example of an up-down cell-wise filtration $\ud$ built
from a given simplex-wise filtration $\Fcal$. 
For brevity, $\Fcal$ does not start and end with empty complexes. The final conversion
to $\hat{\ef}$ is not shown for this example due to page-width constraint. A complete
conversion for a smaller example is shown in Figure~\ref{fig:complete-convert}.}
  \label{fig:cell-filt}
\end{figure}

\begin{figure}[!tbh]
  \centering
  \includegraphics[width=\linewidth]{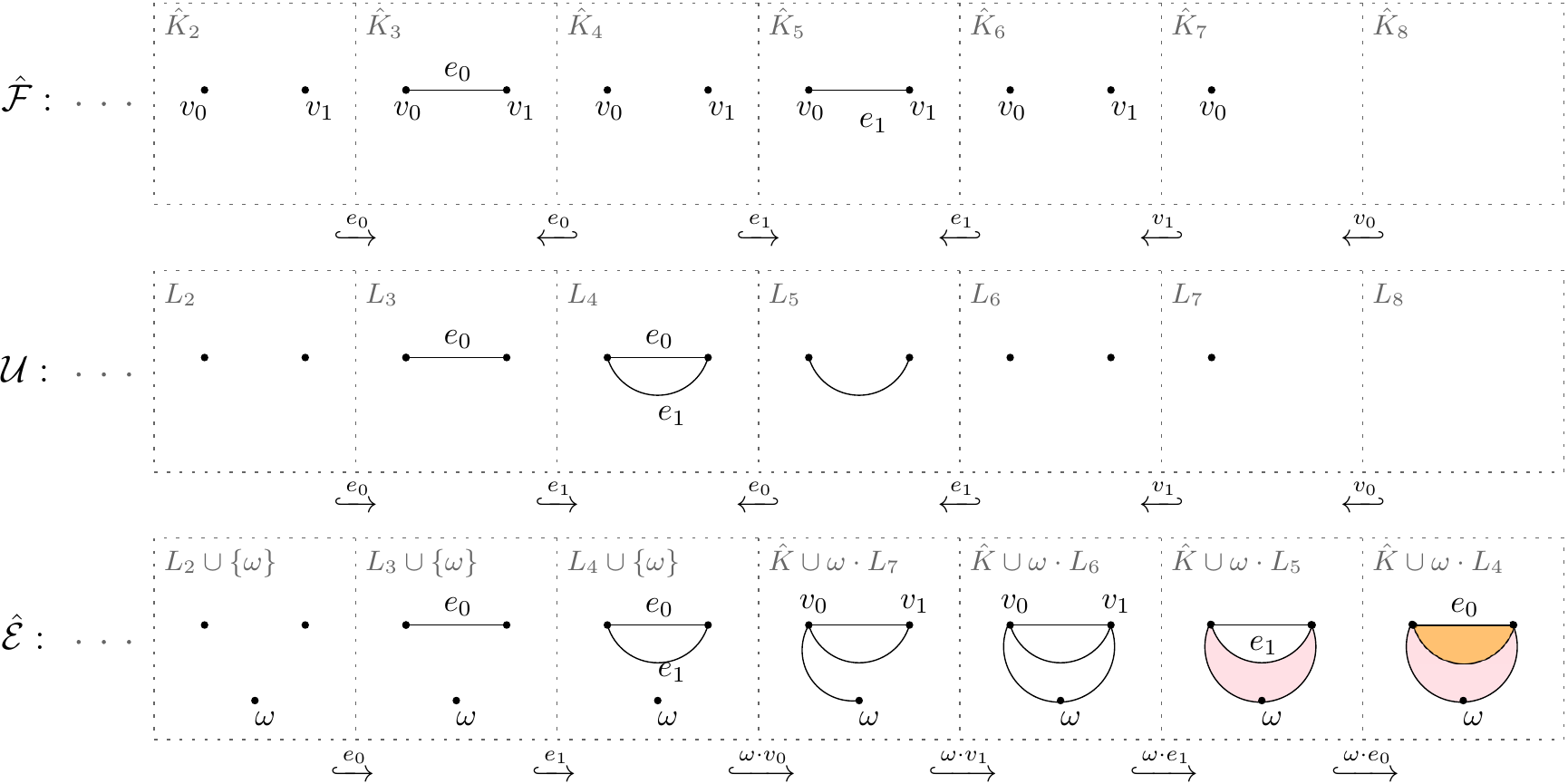}
  \caption{An example of converting a zigzag filtration $\hat{\Fcal}$ to a non-zigzag filtration.}
  \label{fig:complete-convert}
\end{figure}

We provide an example of the up-down cell-wise filtration $\ud$ built
from a given simplex-wise filtration $\Fcal$ in Figure~\ref{fig:cell-filt}.
In the example, edge $c$ and triangle $t$ are repeatedly added twice in $\Fcal$,
and therefore each correspond to two copies of cells in $\ud$. We provide 
another example of a complete conversion from a given zigzag filtration
to a non-zigzag filtration in Figure~\ref{fig:complete-convert}.

With the \textsc{ConvertFilt} subroutine,
Algorithm~\ref{alg:fzz} provides a concise summary of \textsc{FastZigzag}.
Given that for a filtration $\Fcal$ of length $\filtcnt$, \textsc{ConvertFilt}
takes $O(m)$ time
and  \textsc{ConvertBarcode} takes $O(1)$ time per bar,
we now have the following conclusion:
\begin{algorithm}[h!]
\caption{Pseudocode for \textsc{FastZigzag}}

\vspace{1ex}

\begin{algorithmic}[1]
\Procedure{FastZigzag}{$\Fcal$}
\State $D\leftarrow \textsc{ConvertFilt}(\Fcal)$
\State $B\leftarrow \textsc{ComputeBarcode}(D)$
\State $B'\leftarrow \textsc{ConvertBarcode}(B)$
\State \Return $B'$ 
\EndProcedure
\end{algorithmic}
\label{alg:fzz}
\end{algorithm}

\begin{theorem}
Given a simplex-wise zigzag filtration $\Fcal$
with length $\filtcnt$, {\sc FastZigzag} computes $\Pers_*(\Fcal)$ in time $T(\filtcnt)+O(\filtcnt)$,
where $T(\filtcnt)$ is the time used for computing the barcode
of a non-zigzag cell-wise filtration with length $\filtcnt$.
\end{theorem}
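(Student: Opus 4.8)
The plan is to split the claim into a \emph{correctness} part and a \emph{running-time} part, since \textsc{FastZigzag} (Algorithm~\ref{alg:fzz}) is nothing but the composition of \textsc{ConvertFilt}, \textsc{ComputeBarcode}, and \textsc{ConvertBarcode}. For correctness I would chain together the bijections already established earlier in the section. Starting from the input $\Fcal$, recall that $\Pers_*(\Fcal)=\Pers_*(\Fnr)$ because each $\hat{K}_i$ is homeomorphic to $K_i$; then Proposition~\ref{prop:ud-f-map} supplies a bijection $\Pers_*(\ud)\to\Pers_*(\Fnr)$, Proposition~\ref{prop:ef-ud-map} supplies a bijection $\Pers_*(\ef)\to\Pers_*(\ud)$, and the coning construction gives $\Pers_*(\ef)=\Pers_*(\hat{\ef})$ after discarding the single infinite bar. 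Composing these four maps shows that the barcode $B$ of the non-zigzag filtration $\hat{\ef}$ (computed by \textsc{ComputeBarcode} on the boundary matrix $D$) determines $\Pers_*(\Fcal)$, and that \textsc{ConvertBarcode} is precisely this composite translation applied bar-by-bar, so the returned $B'$ equals $\Pers_*(\Fcal)$.

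For the running time I would bound each subroutine separately. First I would observe that the filtration $\hat{\ef}$ encoded by $D$ has length $\filtcnt+O(1)$: since $\Fcal$ starts and ends empty it has $\simpcnt=\filtcnt/2$ additions and $\filtcnt/2$ deletions, the up-down filtration $\ud$ and the extended filtration $\ef$ each have length $2\simpcnt=\filtcnt$, and coning adds only one apex vertex $\oG$ together with one coned cell per deletion, giving $\filtcnt+O(1)$ cells in total. Hence \textsc{ComputeBarcode}$(D)$ costs $T(\filtcnt)$ by the definition of $T$ (the additive $O(1)$ in the length being absorbed). Next, \textsc{ConvertFilt} makes a single forward pass over $\Fcal$ and a single reverse pass over $\mathtt{del\_list}$; with $\mathtt{cid}$, $\mathtt{del\_list}$, and $\mathtt{cone\_id}$ stored as hash maps and lists, each iteration performs $O(1)$ bookkeeping plus one call to \textsc{CellBoundary} or \textsc{ConedCellBoundary}, each writing a boundary column whose size is the (bounded) number of facets of a cell, so the total is $O(\filtcnt)$. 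Finally, \textsc{ConvertBarcode} applies the constant-time per-bar rules of Propositions~\ref{prop:ud-f-map} and~\ref{prop:ef-ud-map} (composed), and there are $O(\filtcnt)$ bars because each step of $\hat{\ef}$ creates at most one bar; thus this stage also costs $O(\filtcnt)$. Summing yields $O(\filtcnt)+T(\filtcnt)+O(\filtcnt)=T(\filtcnt)+O(\filtcnt)$.

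The only genuinely delicate point I expect is the bookkeeping inside \textsc{ConvertFilt}, specifically justifying that each boundary-column computation is $O(1)$. \textsc{CellBoundary} must resolve, for each facet of $\sG_i$, the identifier of its \emph{current} copy through $\mathtt{cid}$, while \textsc{ConedCellBoundary} must assemble $\Set{\sG}\union\Set{\oG\cdot\tG\mid\tG\in\partial\sG}$ from the already-recorded $\mathtt{cone\_id}$ values of the facets, which requires processing deletions in reverse order so that a coned facet is always present before its coface is coned. I would verify this ordering invariant explicitly and note that, with $O(1)$-time map operations and bounded cell dimension, the per-cell cost is constant; the remaining ingredients—the length count and the bar-by-bar barcode composition—are routine once the bijections of the earlier propositions are taken as given.
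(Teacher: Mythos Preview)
Your proposal is correct and follows the same decomposition as the paper, which in fact offers no formal proof at all: the sentence immediately preceding the theorem simply asserts that \textsc{ConvertFilt} runs in $O(\filtcnt)$ and \textsc{ConvertBarcode} in $O(1)$ per bar, and reads off $T(\filtcnt)+O(\filtcnt)$. You supply considerably more detail than the paper does---chaining the correctness bijections explicitly, counting the length of $\hat{\ef}$ as $\filtcnt+O(1)$, verifying the reverse-order invariant for \textsc{ConedCellBoundary}, and flagging the bounded-dimension assumption needed for the per-cell $O(1)$ cost---all of which the paper leaves implicit.
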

\begin{remark}
Theoretically, $T(m)=O(m^\oG)$~\cite{milosavljevic2011zigzag}, 
where $\omega< 2.37286$ is the matrix multiplication exponent~\cite{alman2021refined}.
So the theoretical complexity of \textsc{FastZigzag}
is $O(m^\oG)$.
\end{remark}

\subsection{Experiments}
\label{sec:exp}
We implement the \textsc{FastZigzag} algorithm 
described in this section and compare the performance
with \Dionysustwo{}~\cite{Dionysus2} 
(implementing the algorithm in~\cite{carlsson2009zigzag-realvalue})
and \Gudhi{}\footnote{The code is shared by personal communication.}~\cite{gudhi:urm}
(implementing the algorithm in~\cite{maria2014zigzag,maria2019discrete}).
When implementing \textsc{FastZigzag}, we utilize the \Phat{}~\cite{bauer2017phat} 
software for computing non-zigzag persistence.
Our implementation is publicly available through: 
\url{https://github.com/taohou01/fzz}.

To test the performance,
we generate eleven
simplex-wise filtrations of similar lengths (5$\sim$6 millions; see Table~\ref{tab:perform}).
The reason for using filtrations of similar lengths
is to test the impact of \emph{repetitiveness} on the performance
for different algorithms, where repetitiveness is the average times
a simplex is repeatedly added in a filtration 
(e.g., repetitiveness being 1 means that
the filtration is non-repetitive).
We utilize three different approaches for generating the filtrations:
\begin{itemize}
    \item
The two non-repetitive filtrations (No.\ 1 and 2) 
are generated by first taking a simplicial complex with vertices in $\Real^3$,
and then taking the height function $h$ along a certain axis.
After this, we build an up-down filtration for the complex 
where the first half is the lower-star filtration of $h$ %
and the second half is the (reversed) upper-star filtration of $h$. %
We then randomly perform outward switches on the up-down filtration
to derive a non-repetitive filtration.
Note that the simplicial complex is derived from a triangular mesh 
supplemented by a Vietoris-Rips complex on the vertices;
one triangular mesh (Dragon) is downloaded from 
the Stanford Computer Graphics Laboratory.

    \item
Filtration No.\ 3 -- 8 are generated from a sequence of edge additions
and deletions, for which we then take the \emph{clique complex} 
(up to a certain dimension)
for each edge set in the sequence.
The edge sequence is derived by randomly adding and deleting edges
for a set of points.

    \item 
The remaining filtrations (No.\ 9 -- 11) are the \emph{oscillating Rips zigzag}~\cite{oudot2015zigzag}
generated from point clouds of size 2000 -- 4000 sampled from some triangular meshes
(Space Shuttle from an online repository\footnote{Ryan Holmes: \url{http://www.holmes3d.net/graphics/offfiles/}};
Bunny and Dragon from the Stanford Computer Graphics Laboratory).
\end{itemize}

Table~\ref{tab:perform} lists running time of the 
three algorithms on all filtrations,
where the length, maximum dimension (D),
repetitiveness (Rep), and maximum complex size (MaxK) 
are also provided for each filtration.
From Table~\ref{tab:perform}, we observe that \textsc{FastZigzag} ($\text{T}_\textsc{FZZ}$) consistently
achieves the best running time across all inputs, 
with significant speedups (see column `SU' in Table~\ref{tab:perform}).
The speedup is calculated as the min-time of \Dionysustwo{} and \Gudhi{}
divided by the time of \textsc{FastZigzag}.
Notice that since \Gudhi{} only takes a sequence of edge additions
and deletions as input (and builds clique complexes on-the-fly),
we do not run \Gudhi{} on the first two inputs in Table~\ref{tab:perform}, 
which are only given as simplex-wise filtrations.
We also observe that the speedup of \textsc{FastZigzag} tends to be less prominent
as the repetitiveness increases.
This is because higher repetitiveness leads to smaller max/average complex size
in the input zigzag filtration,
so that algorithms directly working on the input filtration
could have less processing time~\cite{carlsson2009zigzag-realvalue,maria2014zigzag,maria2019discrete}.
On the other hand, the complex size
in the converted non-zigzag filtration that \textsc{FastZigzag} works on is always increasing.

\newcommand{\tabnum}[1]{\tt\small #1}
\begin{table}[!htb]
\centering
\caption{Running time of \Dionysustwo{}, \Gudhi{}, and \textsc{FastZigzag}
on different filtrations of similar lengths with various repetitiveness.
All tests were run on a desktop with Intel(R) Core(TM) i5-9500 CPU @ 3.00GHz, 16GB memory, and Linux OS.}
\label{tab:perform}
\begin{tabular}{cccrrrrrr}
\midrule
No.\  & {Length} & D & Rep & MaxK & {$\text{T}_\textsc{Dio2}$} & {$\text{T}_\textsc{Gudhi}$} & {$\text{T}_\textsc{FZZ}$} & SU\\
\midrule
1 & \tabnum{5,260,700} & \tabnum{5} & \tabnum{1.0} & 
  \tabnum{883,350} & 
  \tabnum{2h02m46.0s} & 
  \tabnum{$-$} & 
  \tabnum{8.9s} &
  \tabnum{873}
\\
2 & \tabnum{5,254,620} & \tabnum{4} & \tabnum{1.0} & 
  \tabnum{1,570,326} & 
  \tabnum{19m36.6s} & 
  \tabnum{$-$} & 
  \tabnum{11.0s} &
  \tabnum{107}
\\
3 & \tabnum{5,539,494} & \tabnum{5} & \tabnum{1.3} & 
  \tabnum{1,671,047} & 
  \tabnum{3h05m00.0s} & 
  \tabnum{45m47.0s} & 
  \tabnum{3m20.8s} &
  \tabnum{13.7}
\\
4 & \tabnum{5,660,248} & \tabnum{4} & \tabnum{2.0} & 
  \tabnum{1,385,979} & 
  \tabnum{2h59m57.0s} & 
  \tabnum{29m46.7s} & 
  \tabnum{4m59.5s} &
  \tabnum{6.0}
\\
5 & \tabnum{5,327,422} & \tabnum{4} & \tabnum{3.5} & 
  \tabnum{760,098} & 
  \tabnum{43m54.8s} & 
  \tabnum{10m35.2s} & 
  \tabnum{3m32.1s} &
  \tabnum{3.0}
\\
6 & \tabnum{5,309,918} & \tabnum{3} & \tabnum{5.1} & 
  \tabnum{523,685} & 
  \tabnum{5h46m03.0s} & 
  \tabnum{1h32m37.0s} & 
  \tabnum{19m30.2s} &
  \tabnum{4.7}
\\
7 & \tabnum{5,357,346} & \tabnum{3} & \tabnum{7.3} & 
  \tabnum{368,830} & 
  \tabnum{3h37m54.0s} & 
  \tabnum{57m28.4s} & 
  \tabnum{30m25.2s} &
  \tabnum{1.9}
\\
8 & \tabnum{6,058,860} & \tabnum{4} & \tabnum{9.1} & 
  \tabnum{331,211} & 
  \tabnum{53m21.2s} & 
  \tabnum{7m19.0s} & 
  \tabnum{3m44.4s} &
  \tabnum{2.0}
\\
9 & \tabnum{5,135,720} & \tabnum{3} & \tabnum{21.9} & 
  \tabnum{11,859} & 
  \tabnum{23.8s} &
  \tabnum{15.6s} &
  \tabnum{8.6s} &
  \tabnum{1.9} 
\\
10 & \tabnum{5,110,976} & \tabnum{3} & \tabnum{27.7} & 
  \tabnum{11,435} & 
  \tabnum{36.2s} &
  \tabnum{39.9s} &
  \tabnum{8.5s} &
  \tabnum{4.3} 
\\
11 & \tabnum{5,811,310} & \tabnum{4} & \tabnum{44.2} & 
  \tabnum{7,782} & 
  \tabnum{38.5s} &
  \tabnum{36.9s} &
  \tabnum{23.9s} &
  \tabnum{1.5} 
\\
\midrule
\end{tabular}
\end{table}

Table~\ref{tab:mem} lists the memory consumption of the three algorithms. We observe that 
\textsc{FastZigzag} tends to consume more memory than the other two 
on the non-repetitive filtrations (No.\ 1 and 2) and the random clique filtrations (No.\ 3 -- 8).
However, \textsc{FastZigzag} is consistently achieving the best memory footprint 
on the oscillating Rips filtrations (No.\ 9 -- 11) despite the high repetitiveness.

\begin{table}[!htb]
\centering
\caption{Memory consumption (in gigabytes) of the three algorithms on all filtrations.}
\label{tab:mem}
\begin{tabular}{ccrrrrr}
\midrule
No.\  & {Length} & Rep & MaxK & {$\text{M}_\textsc{Dio2}$} & {$\text{M}_\textsc{Gudhi}$} & {$\text{M}_\textsc{FZZ}$}\\
\midrule
1 & \tabnum{5,260,700} & \tabnum{1.0} & 
  \tabnum{883,350} & 
  \tabnum{3.23} & 
  \tabnum{$-$} & 
  \tabnum{0.59} 
\\
2 & \tabnum{5,254,620} & \tabnum{1.0} & 
  \tabnum{1,570,326} & 
  \tabnum{3.93} & 
  \tabnum{$-$} & 
  \tabnum{0.61} 
\\
3 & \tabnum{5,539,494} & \tabnum{1.3} & 
  \tabnum{1,671,047} & 
  \tabnum{15.52} & 
  \tabnum{13.49} &
  \tabnum{9.76}
\\
4 & \tabnum{5,660,248} & \tabnum{2.0} & 
  \tabnum{1,385,979} & 
  \tabnum{7.64} & 
  \tabnum{8.43} & 
  \tabnum{11.04} 
\\
5 & \tabnum{5,327,422} & \tabnum{3.5} & 
  \tabnum{760,098} & 
  \tabnum{3.27} & 
  \tabnum{3.40} & 
  \tabnum{6.22} 
\\
6 & \tabnum{5,309,918} & \tabnum{5.1} & 
  \tabnum{523,685} & 
  \tabnum{4.94} & 
  \tabnum{5.27} & 
  \tabnum{10.23} 
\\
7 & \tabnum{5,357,346} & \tabnum{7.3} & 
  \tabnum{368,830} & 
  \tabnum{4.03} & 
  \tabnum{3.91} & 
  \tabnum{8.19}
\\
8 & \tabnum{6,058,860} & \tabnum{9.1} & 
  \tabnum{331,211} & 
  \tabnum{2.12} & 
  \tabnum{1.48} & 
  \tabnum{3.68} 
\\
9 & \tabnum{5,135,720} & \tabnum{21.9} & 
  \tabnum{11,859} & 
  \tabnum{0.92} &
  \tabnum{0.47} &
  \tabnum{0.50} 
\\
10 & \tabnum{5,110,976} & \tabnum{27.7} & 
  \tabnum{11,435} & 
  \tabnum{0.88} &
  \tabnum{0.48} &
  \tabnum{0.47} 
\\
11 & \tabnum{5,811,310} & \tabnum{44.2} & 
  \tabnum{7,782} & 
  \tabnum{0.95} &
  \tabnum{0.60} &
  \tabnum{0.51} 
\\
\midrule
\end{tabular}
\end{table}

\section{Conclusions}

In this paper, we propose a zigzag persistence algorithm 
called \textsc{FastZigzag}
by first treating repeatedly added simplices in an input zigzag filtration
as distinct copies and then converting the input filtration
to a non-zigzag filtration.
The barcode of the converted non-zigzag filtration can then be
easily mapped back to barcode of the input zigzag filtration. 
The efficiency of our algorithm is confirmed by experiments.
This research also brings forth the following open questions:
\begin{itemize}
    \item Parallel versions~\cite{carlsson2019persistent,zhang2020gpu} of the algorithms for computing 
    standard and zigzag exist.
    While the computation of standard persistence in our \textsc{FastZigzag} algorithm 
    can directly utilize the existing parallelization techniques, we ask
    if the conversions done in \textsc{FastZigzag} can be efficiently parallelized.
    Such an extension can provide further speedups by harnessing multi-cores.
    
    \item While persistence intervals are important topological descriptors,
    their \emph{representatives} also reveal critical information
    (e.g., a recently proposed algorithm~\cite{dey2021updating} for updating
    zigzag barcodes over local changes uses
    representatives explicitly).
    Can the \textsc{FastZigzag} algorithm be adapted so that representatives
    for the input zigzag filtration are recovered from representatives 
    for the converted non-zigzag filtration?
\end{itemize}

\section*{Acknowledgment:}
We thank the Stanford Computer Graphics Laboratory
and Ryan Holmes
for providing the triangular meshes used in the experiment
of this paper.

\bibliographystyle{plainurl}
\bibliography{refs}

\appendix

\end{document}